\tikzset{outline/.style args={#1}{%
  draw=#1,thick,fill=#1!50},initial text={}, every state/.style={minimum size=2.2em}}
\newcommand{\ie}{i.e.\@\xspace}
\newcommand{\eg}{e.g.\@\xspace}
\newcommand{\prophesy}{\textrm{PROPhESY}\xspace}
\newcommand{\storm}{\textrm{Storm}\xspace}
\newcommand{\dtmc}{\ensuremath{\mathcal{D}}\xspace}
\newcommand{\pDtmcInit}[1][]{\ensuremath{\pdtmc{#1}=(S{#1},\sinit{#1},T{#1},\Paramvar{#1},\probdtmc{#1})}}
\newcommand{\pDtmcInitGood}[1][]{\ensuremath{\pdtmc{#1}=(S{#1},\sinit{#1},\{\good\},\Paramvar{#1},\probdtmc{#1})}}
\newcommand{\wfa}{\ensuremath{\mathcal{A}}\xspace}
\newcommand{\wfaTrans}{\ensuremath{E}}
\newcommand{\wfaInit}[1][]{\ensuremath{\wfa{#1}=(S{#1},\sinit{#1},T{#1},\Paramvar{#1},\wfaTrans{#1})}}
\newcommand{\gd}{GD\xspace}
\newcommand{\dpmctxt}{derived weighted automaton\xspace}
\newcommand{\dpmc}{\ensuremath{\pt{} \pdtmc}\xspace}
\newcommand{\dpmcTrans}{\ensuremath{E}}
\newcommand{\dpmcInitPrime}[1][]{\ensuremath{\dpmc{#1}=(S',\pt{} \sinit, T, \Paramvar{#1},\dpmcTrans)}}
\newcommand{\pr}{\ensuremath{\mathrm{Pr}}}
\newcommand{\reachPr}[3]{\ensuremath{\pr_{#3}(#1 \models \finally #2)}}
\newcommand{\er}{\ensuremath{\mathrm{ER}}}
\newcommand{\expRew}[3]{\ensuremath{\er_{#3}(#1 \models \finally #2)}}
\newcommandtwoopt{\expRewT}[2][][]{ \ifthenelse{\equal{#2}{}}
	{\expRew{#2}{#1}{T}}
	{\expRew{#1}{#2}{T}}}
\newcommandtwoopt{\expRewGood}[2][][]{ \ifthenelse{\equal{#2}{}}
	{\expRew{#2}{#1}{\good}}
	{\expRew{#1}{#2}{\good}}}
\DeclareMathOperator{\rewFunction}{\text{\it rew}}
\newcommand{\finally}{\lozenge}
\newcommand{\colorpar}[3]{\colorbox{#1}{\parbox{#2}{#3}}}
\newcommand{\marginremark}[3]{\marginpar{\colorpar{#2}{\linewidth}{\color{#1}#3}}}
\newcommand{\R}{\mathbb{R}}
\DeclareMathOperator{\Vect}{\textit{pFun}}
\DeclareMathOperator{\Distr}{\textit{pDistr}}
\DeclareMathOperator{\wfaDistr}{\textit{pQDistr}}
\newcommand{\distDom}{X}
\newcommand{\distFunc}{\mu}
\newcommand{\distDomElem}{x}
\newcommand*{\Paramvar}{{V}} % Set of parameters
\newcommand*{\ParamSpace}{\mathcal{U}} % Parameter space / Universe. Was U^V before
\newcommand{\sinit}{s_{\mathit{I}}} % initial state of DTMC/MDP
\newcommand{\probdtmc}{\mathcal{P}}
\newcommand{\pdtmc}{\ensuremath{\mathcal{M}}\xspace}
\renewcommand{\Pr}{\ensuremath{\textnormal{Pr}}}
\newcommand{\act}{\ensuremath{\alpha}}
\newcommand{\pathset}[2]{\Paths^{#1}_{#2}}
\newcommand{\paths}[2][]{ \ifthenelse{\equal{#1}{}}
	{\pathset{#2}{}}
	{\pathset{#2}{#1}}}
\newcommand{\bad}{\ensuremath{\lightning}}
\newcommand{\Path}{\pi}
\newcommand{\Paths}{\mbox{\sl Paths}}
\newcommand{\sol}[2]{\ensuremath{\pr_{}^{#1 \rightarrow #2}}}% I removed $M$
\newcommand{\solWithM}[3][]{\ensuremath{\pr_{#3}^{#1 \rightarrow #2}}}
\newcommand{\solRew}[2]{\solRewWithM{#1}{#2}{}}%
\newcommand{\solRewWithM}[3]{\ensuremath{\er_{#3}^{#1 \rightarrow #2}}}
\DeclareRobustCommand{\good}{\Simley{0.5}{0.2}\xspace}
\DeclareRobustCommand{\bad}{\Simley{-0.5}{0.2}\xspace}
\newcommand{\Simley}[2]{%
	\begin{tikzpicture}[scale=#2]
	\newcommand*{\SmileyRadius}{1.0}%
	%\draw [draw=none] (0,0) circle (\SmileyRadius)% outside circle
	%node [yshift=-0.22*\SmileyRadius cm] {\tiny #1}% uncomment this to see the smile factor
	;
	
	% ogen
	\pgfmathsetmacro{\eyeX}{0.5*\SmileyRadius*cos(30)}
	\pgfmathsetmacro{\eyeY}{0.5*\SmileyRadius*sin(30)}
	\draw [line width=0.25mm] (\eyeX-0.25,\eyeY) -- (\eyeX-0.25,\eyeY+0.375);
	\draw [line width=0.25mm] (-\eyeX+0.25,\eyeY) -- (-\eyeX+0.25,\eyeY+0.375);
	
	% mond
	\pgfmathsetmacro{\xScale}{2*\eyeX/180}
	\pgfmathsetmacro{\yScale}{1.0*\eyeY}
	\draw[line width=0.25mm, domain=-\eyeX:\eyeX]
	plot ({\x},{
		-0.1+#1*0.15 % shift the smiley as smile decreases
		-#1*1.75*\yScale*(sin((\x+\eyeX)/\xScale))-\eyeY});
	\end{tikzpicture}%
}%
\DeclareMathAlphabet{\mathpzc}{OT1}{pzc}{m}{it}
\def\presuper#1#2%
\spnewtheorem{definition}{Definition}{\bfseries}{\itshape}
\spnewtheorem{defboxed}[definition]{Definition}{\bfseries}{\itshape}
\newtcolorbox{mymathbox}[1][]{colback=white, sharp corners, #1}
\definecolor{onyx}{HTML}{393E41}
\definecolor{redviolet}{HTML}{8B2635}
\definecolor{munsellblue}{HTML}{4B88A2}
\newcommand{\dervar}[2]{\partial_{#1} #2}
\newcommand{\Prob}[0]{\mathcal{P}}
\newcommand{\MC}[0]{\mathcal{M}}
\newcommand{\rv}[1]{\textcolor{redviolet}{#1}}
\newcommand{\mb}[1]{\textcolor{munsellblue}{#1}}
\newcommand{\mbdervar}[2]{\textcolor{munsellblue}{\partial_{#1} #2}}
\newcommand \LTLUntil      {\mathbin{\mathcal{U}\kern-.1em}}
\newcommand{\pt}{\partial_{p}}
\newcommand{\pti}{\partial_{p_i}}
\newcommand{\bestmethod}{\momentumsign}
\newcommand{\Bestmethod}{\momentumsign}
\newcommand{\bestrestrictionmethod}{\projectionwith}
\newcommand{\pso}{PSO\xspace}
\newcommand{\qcqp}{QCQP\xspace}
\newcommand{\easyparameters}{\enquote{easy-parameters}\xspace}
\newcommand{\momentum}{Momentum\xspace}
\newcommand{\momentumsign}{Momentum-Sign\xspace}
\newcommand{\adam}{Adam\xspace}
\newcommand{\radam}{RAdam\xspace}
\newcommand{\nag}{NAG\xspace}
\newcommand{\rmsprop}{RMSProp\xspace}
\newcommand{\plaingd}{Plain \gd}
\newcommand{\projectionwith}{projection\xspace}
\newcommand{\barfunc}{{bar}}
\definecolor{shadecolor}{named}{lightgray}
\Crefname{figure}{Fig.}{Figs.}
\crefname{figure}{Fig.}{Figs.}
\Crefname{definition}{Def.}{Defs.}
\Crefname{definition}{Def.}{Defs.}
\Crefname{tabular}{Tab.}{Tabs.}
\crefname{tabular}{Tab.}{Tabs.}
\Crefname{section}{Sec.}{Secs.}
\crefname{section}{Sec.}{Secs.}
\crefname{algocf}{Alg.}{Algs.}
\Crefname{algocf}{Alg.}{Algs.}
\Crefname{equation}{Eq.}{Eqs.}
\crefname{equation}{Eq.}{Eqs.}
\Crefname{ALC@unique}{Line}{Lines}
\crefname{ALC@unique}{line}{lines}
\def\orcidID#1{\smash{\href{http://orcid.org/#1}{\protect\raisebox{-1.25pt}{\protect\includegraphics{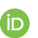}}}}}
\newcommand{\jm}[1]{\marginremark{white}{red!70!blue}{\scriptsize{[JM]~ #1}}}
\newcommand{\jpk}[1]{\marginremark{white}{green!70!black}{\scriptsize{[JPK]~ #1}}}
\newcommand{\js}[1]{\marginremark{white}{orange}{\scriptsize{[JS]~ #1}}}
\newcommand{\lh}[1]{\marginremark{white}{red!70!black}{\scriptsize{[LH]~ #1}}}
\newcommand{\sj}[1]{\marginremark{white}{blue}{\scriptsize{[SJ]~ #1}}}
\newcommand{\jm}[1]{}
\newcommand{\jpk}[1]{}
\newcommand{\js}[1]{}
\newcommand{\lh}[1]{}
\newcommand{\sj}[1]{}
\title{Gradient-Descent for Randomized Controllers under Partial Observability}
\author{Linus~Heck\inst{1}\orcidID{0000-0002-4774-7609} \and Jip Spel\inst{1}$^{(\text{\Letter})}$\orcidID{0000-0002-9113-2791}  \and Sebastian~Junges\inst{2}\orcidID{0000-0003-0978-8466}  \and \\ Joshua~Moerman\inst{1,3}\orcidID{0000-0001-9819-8374} \and Joost-Pieter~Katoen\inst{1}\orcidID{0000-0002-6143-1926}}
\authorrunning{L.~Heck, J.~Spel, S.~Junges, J.~Moerman, J.-P.~Katoen}
\institute{
RWTH Aachen University, Aachen, Germany
 \thanks{Supported by DFG RTG 2236 ``UnRAVeL'' and ERC AdG 787914 FRAPPANT.}
\and
Radboud University, Nijmegen, the Netherlands
\and Open University of the Netherlands, Heerlen, the Netherlands
}
\tikzset{every state/.style={minimum size=2.2em}}
\begin{document}
\lstset{mathescape=true, tabsize=2}

\maketitle
\begin{abstract}
Randomization is a powerful technique to create robust controllers, in particular in partially observable settings. 
The degrees of randomization have a significant impact on the system performance, yet they are intricate to get right. 
The use of synthesis algorithms for parametric Markov chains (pMCs) is a promising direction to support the design process of such controllers. 
This paper shows how to define and evaluate gradients of pMCs. 
Furthermore, it investigates varieties of gradient descent techniques from the machine learning community to synthesize the probabilities in a pMC.
The resulting method scales to significantly larger pMCs than before and empirically outperforms the state-of-the-art, often by at least one order of magnitude.
\end{abstract}\js{artifact badge}

\section{Introduction}
Markov chains (MCs) are the common operational model to describe closed-loop systems with probabilistic behavior, i.e., systems together with their controllers whose behavior is described by a stochastic process (Fig.~\ref{fig:closedloop}). 
Examples include self-stabilizing protocols for distributed systems~\cite{DBLP:conf/podc/IsraeliJ90} and exponential back-off mechanisms in wireless networks. 
Randomization is also important for robustness in autonomous systems with noisy sensors~\cite{DBLP:books/daglib/0014221}, obfuscation and (fuzz) test-coverage~\cite{DBLP:conf/cav/FremontS18}. 
Such systems are typically subject to temporal specifications, e.g., with high probability an autonomous system should not crash, and a self-stabilizing protocol should reach a stable configuration in few expected steps. 
Checking system models against these specifications can be efficiently done using state-of-the-art probabilistic model checking~\cite{DBLP:conf/cav/KwiatkowskaNP11, DBLP:journals/corr/HenselJKQV20}. 
We highlight that while controllers for these systems operate under partial information, the analysis of a system with controller does not need to take partial observability into account. 

\begin{figure}
\centering
\subfigure[Verification of closed-loop systems. Memory of the controller is part of the system.]{
\scalebox{0.75}{
\begin{tikzpicture}
	\node[rectangle,draw,minimum height=0.6cm] (contr) {\scriptsize{Controller}};
	\node[rectangle,draw,right=0.6cm of contr,minimum height=0.6cm] (env) {\scriptsize{Environment}};
	\draw[->] (env.south) -- +(0,-0.2) -| node[pos=0.2,below] (txtin) {\scriptsize{observation}} (contr.south);
	\draw[->] (contr.north) -- +(0,0.2) -| node[pos=0.2,above] (txtout) {\scriptsize{fixed probability for action}} (env.north);
	\node[fit=(contr)(env)(txtout)(txtin),draw,dotted, inner sep=5pt,label={north:system = Markov chain}] (sys) {};
\end{tikzpicture}
\label{fig:closedloop}
}	
}
~
\subfigure[Synthesis of controllers. Memory not fixed and thus not part of the system.]{
\scalebox{0.75}{
\begin{tikzpicture}
	\node[rectangle,draw,minimum height=0.6cm] (contr) {\scriptsize{Controller}};
	\node[rectangle,draw,right=0.6cm of contr,minimum height=0.6cm] (env) {\scriptsize{Environment}};
	\draw[->] (env.south) -- +(0,-0.2) -| node[pos=0.2,below] (txtin) {\scriptsize{observation}} (contr.south);
	\draw[->] (contr.north) -- +(0,0.2) -| node[pos=0.3,above] (txtout) {\scriptsize{which action}?} (env.north);
	\draw[->] (contr.110) -- +(0,0.2) -| node[above,align=center] {\raisebox{-6pt}{\scriptsize{which}}\\\scriptsize{update}?} (contr.130);
	\node[fit=(contr)(env)(txtout)(txtin),dotted,draw,dotted, inner sep=5pt,label={north:system = e.g.~POMDP}] (sys) {};
\end{tikzpicture}	
\label{fig:pomdps}
}
}
~
\subfigure[Parameter synthesis for controllers. Memory fixed and part of the system.]{
\scalebox{0.75}{
\begin{tikzpicture}
	\node[rectangle,draw,minimum height=0.6cm] (contr) {\scriptsize{Controller}};
	\node[rectangle,draw,right=0.6cm of contr,minimum height=0.6cm] (env) {\scriptsize{Environment}};
	\draw[->] (env.south) -- +(0,-0.2) -| node[pos=0.2,below] (txtin) {\scriptsize{observation}} (contr.south);
		\draw[->] (contr.north) -- +(0,0.2) -| node[pos=0.25,above] (txtout) {\scriptsize{\textbf{which} probability for action?}} (env.north);
			\node[fit=(contr)(env)(txtout)(txtin),draw,dotted, inner sep=5pt,label={north:system = pMC}] (sys) {};
	\node[fit=(contr)(env),dotted] (sys) {};
\end{tikzpicture}	
\label{fig:psynt}
}
}
\caption{Verification and (syntax-guided) synthesis for controllers}
\end{figure}
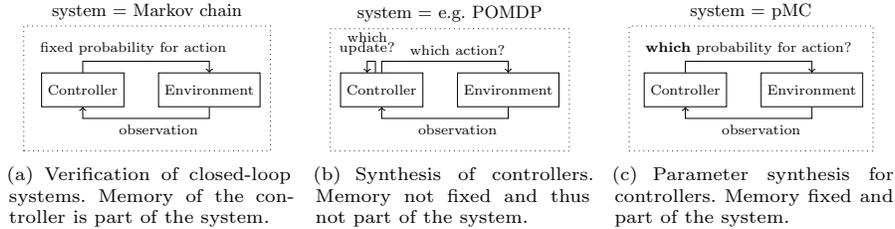

One step beyond verification is the correct-by-construction synthesis of controllers for such systems via Partially Observable Markov Decision Processes (POMDPs)~(Fig.~\ref{fig:pomdps}). 
In general, the synthesis for partial-information controllers is undecidable~\cite{DBLP:journals/ai/MadaniHC03,DBLP:conf/formats/GiroD07,DBLP:journals/jacm/BaierGB12}.
Syntax-guided synthesis~\cite{sygus} takes a simpler perspective and synthesizes only particular system aspects starting from a user-provided template.
In this paper, we focus on being provided with a template controller with a fixed memory structure (influencing the number of indistinguishable states) and a fixed set of potential actions that we want to randomize over. This setting is useful, as in many systems one randomizes on purpose, e.g., in distributed protocols to break symmetry or for robustness. 
In particular, the randomization is controllable, but selecting a (near-)optimal way to randomize is non-trivial. 

The synthesis task reduces to randomize appropriately in a system with a fixed topology (Fig.~\ref{fig:psynt}).
In this context, a controller selects a fixed set of actions (of the POMDP) $\act_1, \hdots, \act_n$ with probabilities $p_1, \hdots ,p_n$.
The aim is to synthesize a \emph{realizable} controller, that is, the result of the synthesis should not enforce to randomize differently in indistinguishable states --- such a controller depends on information which is not available at runtime and therefore cannot be implemented.
Consequently, for indistinguishable states, a realizable controller must take an action $\act$ with the same probability $p_i$.
Synthesizing such controllers can be formally described~\cite{DBLP:conf/uai/Junges0WQWK018} as feasibility synthesis in \emph{parametric} Markov chains (pMCs), i.e., MCs with symbolic probabilities $p_1, \hdots, p_n$~\cite{Daws04,DBLP:journals/fac/LanotteMT07}. 
The feasibility synthesis task asks to find values $u_1, \hdots, u_n$ for the parameters such that the MC satisfies a given property. 
This problem has been studied extensively in the literature, \eg in~\cite{DBLP:conf/tase/ChenHHKQ013, DBLP:conf/tacas/Cubuktepe0JKPPT17,DBLP:conf/atva/GainerHS18, DBLP:conf/nfm/HahnHZ11, DBLP:conf/atva/QuatmannD0JK16}, see also the related work section.

\begin{example}
	\Cref{fig:POMDPtoAnyPMC:POMDP} depicts a POMDP.
	The colors match the observations at a state.
	When observing a red state, $s_1$ or $s_3$, with probability $q_1$ action $\act_1$ is taken and with probability $q_2$ action $\act_2$. 
	At state $s_0$ action $\act_i$ is taken with probability $p_i$. 
	This directly results in the pMC of~\cref{fig:POMDPtoAnyPMC:PMC}.
\end{example}

\begin{figure}[t]
	\centering
	\subfigure[POMDP]{
		\centering
		\scalebox{0.8}{
			\begin{tikzpicture}[every node/.style={font=\scriptsize},st/.style={circle, inner sep=2pt, draw}]
\node[st, fill=blue!40, initial, initial text=] (s0) {$s_0$};
\node[st, right=3cm of s0, fill=white!40] (s2)  {$s_2$};
\node[st, above=of s2, fill=red!60] (s1)  {$s_1$};
\node[st, below=of s2, fill=red!60] (s3)  {$s_3$};

\node[circle, inner sep=1.5pt, fill=black, left=1.5cm of s1] (a1) {};
\node[circle, inner sep=1.5pt, fill=black, left=1.5cm of s2] (a2) {};
\node[circle, inner sep=1.5pt, fill=black, left=1.5cm of s3] (a3) {};
\node[circle, inner sep=1.5pt, fill=black, right=0.75cm of s1] (a4) {};
\node[circle, inner sep=1.5pt, fill=black, below=0.4cm of s1] (a5) {};
\node[circle, inner sep=1.5pt, fill=black, right=0.75cm of s3] (a6) {};
\node[circle, inner sep=1.5pt, fill=black, above=0.4cm of s3] (a7) {};

\draw[->] (s0) -- node[above, xshift=-0.1cm] {$\act_1$} (a1);
\draw[->] (s0) -- node[below] {$\act_2$} (a2);
\draw[->] (s0) -- node[below, xshift=-0.1cm] {$\act_3$} (a3);
\draw[->] (s1) -- node[above] {$\act_1$} (a4);
\draw[->] (s1) -- node[left] {$\act_2$} (a5);
\draw[->] (s3) -- node[below] {$\act_1$} (a6);
\draw[->] (s3) -- node[left] {$\act_2$} (a7);

\draw[->] (a1) -- node[above] {$1$} (s1);
\draw[->] (a2) -- node[above] {$\nicefrac{1}{2}$} (s2);
\draw[->] (a2) -- node[left] {$\nicefrac{1}{2}$} (s3);
\draw[->] (a3) -- node[above] {$1$} (s3);

\draw[->] (a4) edge[bend left] node[above] {$1$} (s2);

\draw[->] (a5) -- node[above] {$\nicefrac{1}{2}$} (s0);
\draw[->] (a5) -- node[left] {$\nicefrac{1}{2}$} (s2);

\draw[->] (a6) edge[bend right] node[above] {$1$} (s2);

\draw[->] (a7) edge[bend left] node[right] {$1$} (s3);
\end{tikzpicture}
		}
		\label{fig:POMDPtoAnyPMC:POMDP}
	}\hspace{2em}
	\subfigure[pMC]{
		\centering
		\scalebox{0.8}{
			\begin{tikzpicture}[every node/.style={font=\scriptsize},st/.style={circle, inner sep=2pt, draw}]
\node[st,initial,initial text=] (s0) {$s_0$};
\node[st, right=3cm of s0] (s2)  {$s_2$};
\node[st, above=of s2] (s1)  {$s_1$};
\node[st, below=of s2] (s3)  {$s_3$};

\draw[->] (s0) -- node[above] {${\color{blue!40}p_1} \cdot 1$} (s1);
\draw[->] (s0) -- node[above] {${\color{blue!40}p_2} \cdot \nicefrac{1}{2}$} (s2);
\draw[->] (s0) -- node[below, xshift=-0.4cm] {${\color{blue!40}p_3 + p_2} \cdot \nicefrac{1}{2}  $} (s3);

\draw[->] (s3) edge[loop right] node[right] {$\color{red!60}q_2$} (s3);
\draw[->] (s3) edge node[right] {$\color{red!60}q_1$} (s2);
\draw[->] (s1) edge[bend right] node[left] {$\nicefrac{1}{2}\cdot {\color{red!60}q_2}$\ \ } (s0);
\draw[->] (s1) edge node[right] {$ {\color{red!60}q_1} + \nicefrac{1}{2}\cdot {\color{red!60}q_2}$} (s2);
\draw[->] (s2) edge[loop right] node[right] {$1$} (s2);
\end{tikzpicture}
		}
		\label{fig:POMDPtoAnyPMC:PMC}
	}
	\caption{From POMDPs to pMCs~\cite[p.~182]{DBLP:phd/dnb/Junges20}.}
	\label{fig:POMDPtoAnyPMC}
\end{figure}

The challenge in applying parameter synthesis is twofold: whereas the problem is ETR-complete\footnote{ETR = Existential Theory of the Reals. ETR-complete decision problems are as hard as finding the roots of a multivariate polynomials.}~\cite{DBLP:journals/jcss/JungesK0W21}, the number of parameters grows linear in the number of different observations and the number of actions available to the controller. 
For many real-life applications we must thus deal with thousands of parameters. 
This scale is out of reach for exact or complete methods~\cite{DBLP:conf/cav/DehnertJJCVBKA15}.
Heuristic methods have shown some promise. 
These methods either rely on efficient model checking but are heavily sample-inefficient~\cite{DBLP:conf/tase/ChenHHKQ013}, or rely on the efficiency of convex solvers to search the parameter space in a more principled way~\cite{DBLP:conf/atva/CubuktepeJJKT18}. 

This paper presents a novel method that advances the state-of-the-art in feasibility synthesis often by one or more orders of magnitude.
The method is rooted in two key observations:
\begin{itemize}
    \item gradient-based search methods, i.e., variants of gradient search, scale to high-dimensional search spaces, and
	\item in pMCs, the gradient at a parameter evaluation can be efficiently evaluated.
\end{itemize}
In this paper, we show a principled way to evaluate gradients in parametric MCs. 
We characterize gradients as solutions of a linear equation system over the field over rational functions and alternatively as expected rewards of an automaton that is easily derived from the pMC at hand. 
Using the efficient computation of gradients, we evaluate both classical (\plaingd, \momentum \gd~\cite{DBLP:books/lib/Rumelhart89}, and Nesterov accelerated \gd~\cite{nesterov1983method, DBLP:conf/icml/SutskeverMDH13}) and adaptive  (\rmsprop~\cite{RMSProp}, \adam~\cite{DBLP:journals/corr/KingmaB14}, and \radam~\cite{DBLP:conf/iclr/LiuJHCLG020}) gradient descent methods.
We also consider the classical gradient descent methods where we only respect the sign of the gradient. 
Furthermore, we investigate various methods (projection, barrier function, logistic function) to deal with restrictions on the parameter space (e.g. parameters should represent probabilities).
Using an empirical evaluation, we show that 1) \bestrestrictionmethod outperforms the other restriction methods, 2) \momentumsign outperforms the other gradient descent methods, and 3) \bestmethod often outperforms state-of-the-art methods \qcqp and \pso.
Moreover, we discuss some domain-specific properties and the consequences for gradient descent.

We formalize our problem statement in \cref{subsec:problemstatement}, discuss the evaluation of gradient in \cref{sec:theory}, consider the use of gradient descent in \cref{sec:gd}, give an empirical evaluation in \cref{sec:empirical}, and discuss related work in \cref{sec:related}.
\cref{sec:conclusion} concludes and provides pointers for future work.

\newpage
\section{Preliminaries}
\label{sec:preliminaries}

\subsection{Parametric Markov Chains}
Let $\Paramvar$ be a set of $n$ real-valued \emph{parameters} (or \emph{variables}) $p_1,\hdots,p_n$.
Let $\R[V]$ denote the set of multivariate polynomials over $\Paramvar$.
%\sj{Polynomials should have rationals as coefficients, even if we consider real-valued solutions}
%\jm{Why? It makes the paper kinda inconsistent. Maybe you care about decidability and so on, but that is not very important here. For readability, I would like to stick to $\R$ only. (Also: we should avoid things like $\Q[V] \cup \R$ which is not even a ring.)}

A \emph{parameter instantiation} is a function $u \colon \Paramvar \to \R$.
We often denote $u$ as a vector $\vec{u} \in \R^n$ by ordering the set of variables $\Paramvar = \{ p_1, \ldots, p_n \}$ and setting $u_i = u(p_i)$.
We assume that all parameters are bounded, \ie, $\text{\it lb}_i \leq u(p_i) \leq \text{\it ub}_i$ for each parameter $p_i$. Let $R_i=[\text{\it lb}_i, \text{\it ub}_i]$ denote the bounds for parameter $p_i$ in region $R$.
The \emph{parameter space} of $V$, denoted $\ParamSpace \subseteq \R^\Paramvar$, is the set of all possible parameter values, \ie the hyper-rectangle spanned by the intervals $[\text{\it lb}_i, \text{\it ub}_i]$.
A set $R \subseteq \ParamSpace$ of instantiations is called a \emph{region}.

A polynomial $f$ can be interpreted as a function $f \colon \R^n \to \R$ where $f(u)$ is obtained by substitution, \ie in $f(u)$ each occurrence of $p_i$ in $f$ is replaced by $u(p_i)$.
To make clear where substitution occurs, we write $f[u]$ instead of $f(u)$ from now on.
We let $\pt{} f$ denote the partial derivative of $f$ with respect to $p$.

Let X be any set and let $\Vect(X) = \{ f \mid f \colon X \to \R[V] \}$ denote the set of generalized functions.
Now, let ${\Distr(\distDom) \subset \Vect(X)}$ denote the set of \emph{parametric probability distributions} over $\distDom$, \ie, the set of functions $\distFunc \colon \distDom \to \R[V]$ such that $0 \leq \distFunc(\distDomElem)[u] \leq 1$ and $\sum_{\distDomElem\in\distDom}\distFunc(\distDomElem)[u]=1$ for all $u$ in the parameter space $\ParamSpace$.

\begin{definition}\label{def:pmc}
	A \emph{parametric Markov chain (pMC)} is a tuple $\pDtmcInit$ with a finite set $S$ of \emph{states}, an \emph{initial state} $\sinit \in S$, a finite set $T \subseteq S$ of \emph{target states}, a finite set $\Paramvar$ of real-valued variables \emph{(parameters)} and a \emph{transition function} $\probdtmc \colon S \to \Distr(S)$.
\end{definition}

The parametric probability of going from state $s$ to $t$, denoted $\probdtmc(s, t)$, is given by $\probdtmc(s)(t)$. 
A pMC with $\Paramvar = \emptyset$ is a \emph{Markov chain} (MC).
We will use $\pdtmc$ to range over pMCs and $\dtmc$ to range over MCs.
Applying an instantiation $u$ to a pMC $\pdtmc$ yields MC $\pdtmc[u]$ by replacing each transition $f \in \R[\Paramvar]$ in $\pdtmc$ by $f[u]$.
An instantiation $u$ is \emph{graph-preserving} (for $\pdtmc$) if the topology of $\pdtmc$ is preserved, \ie, $\probdtmc(s,s') \neq 0$ implies $\probdtmc(s,s')[u] \neq 0$ for all states $s, s'$.
A region $R$ is graph-preserving if all $u \in R$ are graph-preserving.

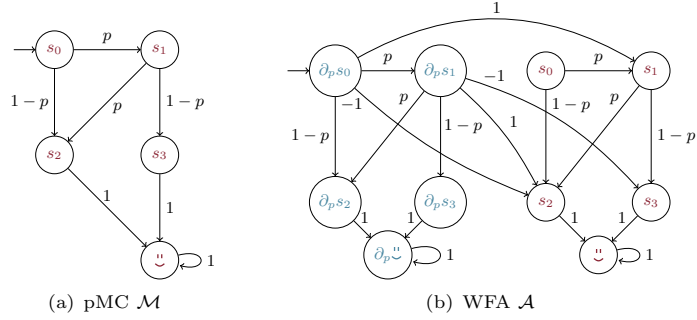
\begin{figure}[t]
	\centering	
	\subfigure[pMC \pdtmc]{
		\label{fig:pmc}
		\scalebox{0.7}{	
			\begin{tikzpicture}	[every node/.style={font=\footnotesize, node distance=2cm},simpstate/.style={circle,draw,fill=white,inner sep=1pt,minimum width=14pt}]		
				\node[state, initial] (s0) {$\rv{s_0}$};
				\node[state, right of=s0] (s1) {$\rv{s_1}$};
				\node[state, below of=s1] (s3) {$\rv{s_3}$};
				\node[state, below of=s3] (good) {$\rv{\good}$};
				\node[state, below of=s0] (s2) {$\rv{s_2}$};
				
				\draw[->](s0) edge node[auto] {$p$} (s1)
				(s0) edge node[left] {$1-p$} (s2)
				(s1) edge node[right] {$1-p$} (s3)
				(s1) edge node[right, yshift=-0.1cm] {$p$} (s2)
				(s3) edge node[right] {$1$} (good)
				(s2) edge node[above] {$1$} (good)
				(good) edge[loop right] node[right] {$1$} (good)
				;	
		\end{tikzpicture}}
}~
\subfigure[WFA \wfa]{
	\label{fig:wfa}
	\label{fig:dpmc}
	\scalebox{0.7}{	
			\begin{tikzpicture}	[every node/.style={font=\footnotesize, node distance=2cm},simpstate/.style={circle,draw,fill=white,inner sep=1pt,minimum width=14pt}]	
				\node[state, initial] (ds0) {$\mbdervar{p}{s_0}$};
				\node[state, right of=ds0] (ds1) {$\mbdervar{p}{s_1}$};
				\node[state, below of=ds1, yshift=-0.5cm] (ds3) {$\mbdervar{p}{s_3}$};
				\node[state, below of=ds3, yshift=1cm, xshift=-1cm] (dgood) {$\mbdervar{p}{\good}$};
				\node[state, below of=ds0, yshift=-0.5cm] (ds2) {$\mbdervar{p}{s_2}$};	
				\node[state, right of=ds1] (s0) {$\rv{s_0}$};
				\node[state, right of=s0] (s1) {$\rv{s_1}$};
				\node[state, below of=s1, yshift=-0.5cm] (s3) {$\rv{s_3}$};
				\node[state, below of=s3, yshift=1cm, xshift=-1cm] (good) {$\rv{\good}$};
				\node[state, below of=s0, yshift=-0.5cm] (s2) {$\rv{s_2}$};

				\draw[->](s0) edge node[auto] {$p$} (s1)
				(s0) edge node[right, yshift=0.6cm] {$1-p$} (s2)
				(s1) edge node[right] {$1-p$} (s3)
				(s1) edge node[above, yshift=0.5cm, xshift=0.3cm] {$p$} (s2)
				(s3) edge node[above, xshift=-0.05cm] {$1$} (good)
				(s2) edge node[above, xshift=0.05cm] {$1$} (good)
				(good) edge[loop right] node[right] {$1$} (good)
				% derivative
				(ds0) edge node[auto] {$p$} (ds1)
				(ds0) edge node[left] {$1-p$} (ds2)
				(ds1) edge node[right, yshift=0.2cm, xshift=-0.05cm] {$1-p$} (ds3)
				(ds1) edge node[above, yshift=0.5cm, xshift=0.3cm] {$p$} (ds2)
				(ds3) edge node[above, xshift=-0.05cm] {$1$} (dgood)
				(ds2) edge node[above, xshift=0.05cm] {$1$} (dgood)
				(dgood) edge[loop right] node[right] {$1$} (dgood)
				% between
				(ds0) edge[bend left] node[above] {$1$} (s1)
				(ds0) edge[bend right=10] node[above, xshift=-1.6cm, yshift=0.65cm] {$-1$} (s2)
				(ds1) edge[bend left=15] node[above, xshift=-1.25cm, yshift=0.6cm] {$-1$} (s3)
				(ds1) edge[bend left=10] node[auto] {$1$} (s2)
				;
			\end{tikzpicture}
		}
}
\caption{A (left) sample parametric MC and (right) its derived weighted automaton}
\end{figure}
\begin{example}
	\Cref{fig:pmc} depicts pMC \pdtmc\ with a single parameter $p$.
	Region $R = [0.1, 0.9]$ is graph-preserving, while $R=[0,0.9]$ is not graph-preserving.
\end{example}

We fix an MC $\dtmc$. 
Let $\Paths(s)$ denote the set of all infinite paths in $\dtmc$ starting from $s$, \ie, infinite sequences of the form $s_0 s_1 s_2 \ldots$ with $s_0 = s$ and $\probdtmc(s_i,s_{i+1}) > 0$.
A probability measure $\Pr_\dtmc$ is defined on measurable sets of infinite paths using a standard cylinder construction; for details, we refer to, e.g.,~\cite[Ch.~10]{BK08}.
For $T \subseteq S$ and $s \in S$, let
\begin{equation}\label{eqn:probmeasure}
\reachPr{s}{T}{\dtmc} \ = \ \Pr_\dtmc \{ \, s_0 s_1 s_2 \ldots \in \Paths(s) \mid \exists i. \, s_i \in T  \, \}
\end{equation}
denote the probability to eventually reach some state in $T$ from $s$.
For a pMC $\pdtmc$, the reachability probability depends on the parameters and so we define it as a function $\solWithM[s]{T}{\pdtmc} \colon \ParamSpace \to [0,1]$ given by $\solWithM[s]{T}{\pdtmc}[u] = \reachPr{s}{T}{{\pdtmc[u]}}$~\cite{Daws04}.
For conciseness we typically omit the subscript $\pdtmc$ and write $\solWithM[s]{T}{}$.
Zero and one reachability probabilities are preserved for graph-preserving instantiations, \ie, for all graph-preserving $u, u' \in \ParamSpace$, we have $\sol{s}{T}[u]=0$ implies $\sol{s}{T}[u']=0$ and analogously for $= 1$.
In these cases, we just write $\sol{s}{T} = 0$ or $=1$.
Let $\bad$ denote all states $s\in S$ with $\sol{s}{T} = 0$.
W.l.o.g., we assume that there is at most one $\bad$ state (this is standard preprocessing~\cite[Ch.\ 10]{BK08}).
Furthermore, we merge all states $s \in T$ into a single \good state.

\begin{example}
	For all states $s\in S$ in pMC \pdtmc from \cref{fig:pmc},
	we have $\sol{s}{\good} = 1$.
	Therefore, the pMC \pdtmc has no $\bad$ state.
\end{example}

%For a pMC $\pdtmc$, we define $\solWithM[s]{T}{\pdtmc}$ to be a function $\solWithM[s]{T}{\pdtmc} \colon \ParamSpace \to [0,1]$, with $\solWithM[s]{T}{\pdtmc}(u) = \reachPrT[s][{\pdtmc[u]}]$.
%On a graph-preserving region, the function $\solWithM[s]{T}{\pdtmc}$ is continuously differentiable~\cite{DBLP:conf/atva/QuatmannD0JK16} and admits a closed-form as a rational function over $\Paramvar$~\cite{Daws04}.
%
%In this paper we are concerned with the more general notion of \emph{expected rewards}.

\subsection{Expected Rewards}
\label{subsec:properties}
We are not only concerned with reachability probabilities but also with expected rewards.
Let \emph{state} reward function \(\rewFunction \colon S \to \R\) associate a reward to each state.
The cumulative reward for a finite path \(\hat{\Path} =  s_0 s_1 \hdots s_n\) is defined by:
\[\rewFunction(\hat{\Path}) = \rewFunction(s_0) + \rewFunction(s_1) + \hdots + \rewFunction(s_{n-1}).\]
For infinite paths $\Path = s_0 s_1 s_2 \cdots$ the reward to eventually reach $\good$ in $\pdtmc$ is:
\begin{align*}
\rewFunction(\Path, \finally \good) &= \begin{cases}
	\rewFunction(s_0 s_1 \hdots s_{n}) & \text{ if } s_i \neq \good \text{ for } 0 \leq i < n \text{ and } s_n = \good\\
	\infty & \text{ if } \Path \not \models \finally \good.
\end{cases}
\end{align*}
%As we only consider pMCs where \(\reachPr{s}{\good}{} = 1\), the latter case will not occur.

\begin{remark}
	For the sake of simplicity, we restrict ourselves to constant rewards.
	However, all notions and concepts considered in the remainder of this paper can be generalized to parametric reward functions in a straightforward manner.
	%In fact, our implementation supports parametric rewards.
\end{remark}
\begin{remark}
	From now on, we only consider graph-preserving regions and we restrict ourselves to pMCs where every state $s$ eventually reaches $\good$ almost surely, \ie, \(\sol{s}{\good} = 1\).
\end{remark}

\begin{definition}[Expected reward]
	\label{def:exprew}
	The \emph{expected reward} until reaching $\good$ from $s \in S$ for an MC \dtmc is defined as follows:
		\[
		\expRew{s}{\good}{\dtmc} \ = \ \int^{\Paths(s)}_{\Path \models \finally \good} \rewFunction(\pi, \finally \good) \cdot \Pr(\Path).
		\]
\end{definition}
\sj{Something just like this?}
%%
%\begin{definition} %% [expected reward of a state]
%Let \pdtmc a pMC with state space $S$, target state $\good \in S$, and \(\rewFunction \colon S \to \R\) be a state reward function.
%The \emph{expected reward of state $s \in S$} to reach $\good$, denoted $\phi(s)\colon \ParamSpace \to \R$, is defined by:
%	\begin{align*}
%		%\phi(\good) &= \rewFunction(\good) \\
%		\phi(s) &= \rewFunction(s) + \sum_{s' \in S} \probdtmc(s, s') \cdot \phi(s') & \text{for } s \in S \setminus \{ \good \}
%	\end{align*}
%	where we interpret $\probdtmc(s, s')$ as a function and ${+}, {\cdot}$ are pointwise operations.
%	\jm{Is this really a definition? It looks (co-)recursive... Initially I wrote something else, also $\phi$ is a bad name for a definition.}
%	\jpk{This is not a definition but a recursive characterisation. Note that the base case is missing. I suggest to first define the expected reward as a solution function, in the same way as you define transition probabilities. Then you can provide the equation system and claim that its unique solution is exactly the polynomial for the expected reward to reach $\good$.}
%\end{definition}
%%
The expected reward for a pMC $\pdtmc$ is defined analogously, but as a function $\solRewWithM{s}{\good}{\pdtmc} \colon \ParamSpace \to \R$,
given by $\solRewWithM{s}{\good}{\pdtmc}[u] = \expRew{s}{\good}{{\pdtmc[u]}}$.
Again, for conciseness we typically omit the subscript $\pdtmc$.

% we define the \emph{cumulative reward} for a finite path $\pi = s_0s_1\cdots s_n$ as $\rewFunction(\pi) = \rewFunction(s_0) + \rewFunction(s_1) + \cdots + \rewFunction(s_{n-1})$. 
%The \emph{expected reward} for eventually reaching $\good$ from $s$ is defined as \(\expRewGood[s][\pdtmc][\vec{u}] = \sum_{\pi \in \paths{\good}} \Pr(\pi)[\vec{u}] \cdot \rewFunction(\pi)\).
\begin{example}
	\label{ex:pmc:er}
	Reconsider the pMC \pdtmc from \cref{fig:pmc} with a
	state reward function $rew(s_i) = i$ for $s_i \in S \setminus \{\good\}$.
	The expected reward function $\solRew{s_0}{\good}$ is given by
	$3 \cdot p^2 + 4 \cdot p \cdot (1{-}p) + 2 \cdot (1{-}p) = {-}p^2+2 \cdot p+2$.
\end{example}
On a graph-preserving region, the function $\solRew{s}{\good}$ is always continuously differentiable~\cite{DBLP:conf/atva/QuatmannD0JK16} and admits a closed-form as a rational function over $\Paramvar$~\cite{Daws04,DBLP:conf/spin/HahnHZ09}.

\begin{remark}
Reachability probabilities are obtained by using expected rewards by letting $\rewFunction(s) = 0$ for $s \in S \setminus\{\good\}$ and $\rewFunction(\good) = 1$.
We add one sink state $s'$ s.t. $\probdtmc(s,s') = 0$ if $s \in S\setminus\{\good, \bad\}$ and  $\probdtmc(s,s') = 1$ otherwise.
The quantity $\solRew{s_0}{s'}$ now equals the reachability probability of eventually reaching $\good$.
\end{remark}

%In the parametric case, we extend this to a function which takes an instantiation.
%We allow the state reward function to be parametric: $\rewFunction \colon S \to \R[\Paramvar]$.
%The parametric expected reward is the function $\solRewWithM{s}{T}{\pdtmc}$, defined by  $\solRewWithM{s}{T}{\pdtmc}(u) = \expRewT[s][{\pdtmc[u]}]$.
%Note that for strict positive rewards $\expRewT[{\dtmc[\vec{u}]}] = \infty$ if $\reachPrT[{\dtmc[\vec{u}]}] < 1$~\cite{BK08}.

%Well-defined and graph-preserving instantiations $\vec{u}, \vec{u}'$ have the same one probabilities, \ie, $\sol[s]{T}{\pdtmc}(\vec{u}) = 1$ implies $\sol[s]{T}{\pdtmc}(\vec{u}') = 1$%, and analogous for ${=}1$.
%Consequently, we simply write $\sol[s]{T}{\pdtmc} = 1$ as it is independent from the instantiation. %(or ${=}1$)
%Let $\good$ and $\bad$ denote all states $s \in S$ with respectively $\sol[s]{T}{\pdtmc} = 1$ and $\sol[s]{T}{\pdtmc} = 0$.

%\begin{itemize}
%	\item \emph{Unbounded probabilistic reachability:} Find the instantiation $\vec{u}^*$ such that
%	\[
%	\vec{u}^* \, = \, \arg\max_{\vec{u} \in R} \, \reachPrT[{\dtmc[\vec{u}]}]
%	\]
%	\item \emph{Expected reward until a target:} Find the instantiation $\vec{u}^*$ such that
%	\[
%	\vec{u}^* \, = \, \arg\max_{\vec{u} \in R} \, \expRewT[{\dtmc[\vec{u}]}]
%	\]
%\end{itemize}
\subsection{Problem Statement}
\label{subsec:problemstatement}

This paper is concerned with the question of synthesising a randomized controller under partial observability. 
Synthesizing these controllers can formally be described~\cite{DBLP:conf/uai/Junges0WQWK018} as feasibility synthesis in pMCs.
Therefore, we consider the following question on the expected reward of eventually reaching a target state $\good$ in a given pMC $\pdtmc$ and a graph-preserving region\footnote{Technically, we use graph-preserving to ensure continuously differentiability of $\solRewWithM{s}{\good}{\pdtmc}$. For acyclic pMCs, these functions are continuously differentiable without assuming graph-preservation~\cite{DBLP:journals/jcss/JungesK0W21}.} $R$:
\begin{mdframed}[backgroundcolor=blue!5, nobreak=true, skipabove=4pt, skipbelow=0pt]
	\noindent
	Given $\lambda \geq 0$, and comparison operator $\sim$, find an instantiation $u \in R$ with:
	\[
	\expRew{s}{\{\good\}}{\pdtmc[u]} \ \sim \ \lambda.
	\]
\end{mdframed}
To solve this problem, we first show how to compute the derivative of \(\solRew{s}{\good}\) and introduce a \emph{derived} weighted automaton.
Then, we exploit this derivative by considering several gradient descent methods and applying them to solve our problem.
Finally, we show how our approach experimentally compares to existing methods from \cite{DBLP:conf/atva/CubuktepeJJKT18,DBLP:conf/tase/ChenHHKQ013}.

\section{Computing Gradients for Expected Rewards}
\label{sec:theory}
In this section, we show that we can efficiently evaluate the gradient of the function $\solRew{s}{\good}$ with respect to a parameter $p$ at an instantiation $u$.
We note that first computing $\solRew{s}{\good}$ and deriving this function symbolically is intractable: the function can be exponentially large in the number of parameters~\cite{DBLP:journals/iandc/BaierHHJKK20}.
A tractable construction follows from taking the derivative of the equation system that characterizes the expected reward~\cite[Ch.~10]{BK08}.
Alternatively, it can be obtained as an equation system for the expected rewards of a ``derived'' pMC.
Let $\pDtmcInitGood$ with reward function $\rewFunction$ and parameter $p \in V$.

\subsection{Equation-System Based Characterisation}
\label{subsec:equationsystem}

\begin{restatable}{definition}{defsoed}\label{def:soed}
The system of equations for the partial derivative of $\solRewWithM{s}{\good}{\pdtmc}$ w.r.t. $p \in V$ is given by:
	\begin{align*}
		\rv{x_s} & \ = \ 0,\, \mb{\pt{}  x_{s}} \ = \ 0 &\text{if } s = \good \\
		\rv{x_s}  & \ = \ \rewFunction(s) + \sum_{s' \in S} \Prob(s, s') \cdot \rv{x_{s'}} &\text{for } s \in S \setminus \{\good\}\\
%		\mb{\pt{}  x_{\good}} &= 0 \\
%		\mb{\pt{}  x_s} &= \pt{} rew(s) + \sum_{s' \in S} \left(
%		\pt{}  \Prob(s, s') \cdot \rv{x_{s'}} +
%		\Prob(s, s') \cdot \mb{\pt{}  x_{s'}} \right) \; \;
%		&\text{for } s \in S \setminus \{\good\}
		\mb{\pt{}  x_s} & \ = \ \sum_{s' \in S} \big(
		\pt{}  \Prob(s, s') \cdot \rv{x_{s'}} +
		\Prob(s, s') \cdot \mb{\pt{}  x_{s'}} \big) \; \;
		&\text{for } s \in S \setminus \{\good\}.
	\end{align*}
%	where \(\pt{}  \Prob(s, s')\) and \(\pt rew(s)\) are the derivatives w.r.t.\ \(p\) of respectively \(\Prob(s, s')\) and \(rew(s)\) .
where \(\pt{}  \Prob(s, s')\) is the derivative of the probability function \(\Prob(s, s')\) w.r.t.\ \(p\).
\end{restatable}
\noindent
Note that we obtain the derivative for $\rv{x_s}$, \ie $\mb{\pt{} x_s}$, by applying the sum rule and the product rule to $\rv{x_s}$.
This equation system is equivalent to an equation system for POMDPs in~\cite[p.47-48]{Aberdeen2003}.
%\jm{Obtained from what? I guess from the system in \cite{Aberdeen2003} and the other references.} \js{I changed the sentence}
%It is is equivalent to the equation system in e.g.~\cite{Aberdeen2003}.
We remark that the equation system is linear with coefficients in a polynomial ring.
However, if the parameters are considered to be variables, then the system of equations is nonlinear (and nonconvex)~\cite{DBLP:conf/tacas/Cubuktepe0JKPPT17}.
Observe that the equations for $\rv{x_s}$ do not depend on the equations for $\mb{\pt{}  x_s}$ and thus can be solved independently first.
The equations for $\rv{x_s}$ have a unique solution which coincides with $\solRew{s}{\good}$.
This is a known result for MCs~\cite[Ch.\ 10]{BK08} and carries over to pMCs~\cite{DBLP:phd/dnb/Junges20}.
We show below that the equation system for $\mb{\pt{}  x_s}$ has a unique solution as well and yields the derivative $\pt{}\solRew{s}{\good}$.

\begin{example}
For our running example we obtain the following equation system:
%%	The system of equations for the partial derivative consists of the system of equations:
%%	\begin{align*}
%%		\mb{\pt{} x_0} &= 1 \cdot \rv{x_1} + p \cdot \mb{\pt{} x_1}  + -1 \cdot \rv{x_2} + (1-p) \cdot \mb{\pt{} x_2} \\
%%		\mb{\pt{} x_1} &= 1 \cdot \rv{x_2} + p \cdot \mb{\pt{} x_2}  + -1 \cdot \rv{x_3} + (1-p) \cdot \mb{\pt{} x_3} \\
%%		\mb{\pt{} x_2} &= 1 \cdot \mb{\pt{} x_{\good}} \\
%%		\mb{\pt{} x_3} &= 1 \cdot \mb{\pt{} x_{\good}} \\
%%		\mb{\pt{} x_{\good}} &= 0.
%%	\end{align*}
\begin{align*}
\rv{x_0} &= 0+p\cdot \rv{x_1}+(1{-}p)\cdot \rv{x_2} &\
\mb{\pt{} x_0} &= 1 \cdot \rv{x_1} + p \cdot \mb{\pt{} x_1}  + -1 \cdot \rv{x_2} + (1{-}p) \cdot \mb{\pt{} x_2} \\
\rv{x_1} &= 1+p\cdot \rv{x_2}+(1{-}p)\cdot \rv{x_3} &\
		\mb{\pt{} x_1} &= 1 \cdot \rv{x_2} + p \cdot \mb{\pt{} x_2}  + -1 \cdot \rv{x_3} + (1{-}p) \cdot \mb{\pt{} x_3} \\
\rv{x_2} &= 2 + 1\cdot \rv{x_{\good}} &\
		\mb{\pt{} x_2} &= 1 \cdot \mb{\pt{} x_{\good}} \\
\rv{x_3} &= 3 + 1\cdot \rv{x_{\good}} &\
		\mb{\pt{} x_3} &= 1 \cdot \mb{\pt{} x_{\good}} \\
\rv{x_{\good}} &=0 &\ 		\mb{\pt{} x_{\good}} &= 0.
	\end{align*}
Solving these equations yields $\rv{x_0} = {-}p^2+2 \cdot p+2$, the expected reward function $\solRew{s_0}{\good}$, see~\cref{ex:pmc:er}, and $\mb{\pt{} x_0} = -2\cdot p+2$, \ie, $\pt{}\solRewWithM{s_0}{\good}{}$.
\end{example}

\begin{restatable}{theorem}{thmdsunique}
\label{thm:ds_unique}
The equation system of \cref{def:soed} has exactly one solution: $\rv{x_s}$ equals $\solRew{s}{\good}$ and $\mb{\pt{} x_s}$ equals $\pt{}\solRewWithM{s}{\good}{}$ for each $s \in S$.
%	where \(A\) is the constrained transition matrix for $\pdtmc$ in which the row and column for \(\good\) is omitted.
%This system of equations has exactly one solution.
\end{restatable}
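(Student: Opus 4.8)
The plan is to read \cref{def:soed} as two linear systems that share the \emph{same} coefficient matrix. Writing $\hat S = S \setminus \{\good\}$ and collecting the unknowns into vectors indexed by $\hat S$, I let $\mathbf{P}$ be the matrix with entries $\mathbf{P}_{s,s'} = \Prob(s,s')$ for $s,s' \in \hat S$ and $\mathbf{I}$ the identity. Since $x_{\good} = 0 = \pt{} x_{\good}$, the target state drops out of every sum, so the equations for $x_s$ become $(\mathbf{I} - \mathbf{P})\vec{x} = \vec{r}$ with $\vec{r}_s = \rewFunction(s)$, and the equations for $\pt{} x_s$ become $(\mathbf{I} - \mathbf{P})\,\overrightarrow{\pt{} x} = \vec{b}$ with $\vec{b}_s = \sum_{s' \in S} \pt{}\Prob(s,s')\cdot x_{s'}$. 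As recalled just above the theorem, the first system already has the unique solution $x_s = \solRew{s}{\good}$, so it remains to establish two things: that $\mathbf{I} - \mathbf{P}$ is invertible (which gives uniqueness for \emph{both} systems at once), and that the substitution $\pt{} x_s = \pt{}\solRewWithM{s}{\good}{}$ solves the second system.

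For invertibility, I would argue that $\det(\mathbf{I} - \mathbf{P})$ is not the zero polynomial, which makes $\mathbf{I} - \mathbf{P}$ invertible over the field $\R(V)$ of rational functions and forces each system to have exactly one solution. To see this, pick any graph-preserving instantiation $u$: by the standing assumption every state reaches $\good$ almost surely in the instantiated chain, so every state of $\hat S$ is transient, whence $\lim_{k\to\infty}\mathbf{P}[u]^k = \mathbf{0}$, the matrix $\mathbf{I} - \mathbf{P}[u]$ is invertible (with inverse the fundamental matrix $\sum_k \mathbf{P}[u]^k$), and $\det(\mathbf{I} - \mathbf{P})[u] \neq 0$. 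A single such witness shows that the polynomial $\det(\mathbf{I} - \mathbf{P})$ is not identically zero.

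To identify the derivative, I would use that on the graph-preserving region the vector $\vec{y}$ with entries $y_s = \solRew{s}{\good}$ consists of rational functions in $V$ and satisfies $(\mathbf{I} - \mathbf{P})\vec{y} = \vec{r}$ as an identity (this is precisely the already-cited solution of the first system). Differentiating this identity with respect to $p$, applying the product rule, and using $\pt{}\vec{r} = \vec{0}$ because rewards are constant, gives componentwise $\pt{} y_s - \sum_{s' \in \hat S}\Prob(s,s')\,\pt{} y_{s'} = \sum_{s' \in S}\pt{}\Prob(s,s')\cdot y_{s'}$, which is exactly $(\mathbf{I} - \mathbf{P})\,\overrightarrow{\pt{} y} = \vec{b}$ (again using $y_{\good} = 0$ to drop the target state on the right). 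Thus $\overrightarrow{\pt{} y}$ solves the derivative system, and by invertibility of $\mathbf{I} - \mathbf{P}$ it is its unique solution; since $\pt{} y_s = \pt{}\solRewWithM{s}{\good}{}$, this proves the theorem.

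The main obstacle I expect is the invertibility step: it is the only place where the almost-sure-reachability assumption and the graph-preserving restriction are genuinely used, and it is what licenses moving freely between the pointwise (per-instantiation) and the rational-function viewpoints. The differentiation step is then essentially a one-line consequence of the two systems sharing the matrix $\mathbf{I} - \mathbf{P}$; the only care needed there is to note that the formal derivative of the rational function $\solRew{s}{\good}$ agrees with $\pt{}\solRewWithM{s}{\good}{}$, which is guaranteed by the continuous differentiability and closed-form rationality recalled earlier in the excerpt.
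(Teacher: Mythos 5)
Your proposal is correct and follows essentially the same route as the paper's proof: both split \cref{def:soed} into two triangular linear systems sharing the coefficient matrix $I-A$ over $S\setminus\{\good\}$, obtain uniqueness from the invertibility of $I-A$, and identify the second solution as the derivative by applying the product rule to the first system. You are somewhat more explicit than the paper on two points it leaves implicit --- justifying invertibility over the rational-function field via a non-vanishing determinant at a graph-preserving witness instantiation, and spelling out that differentiating the identity $(I-A)\vec{y}=\vec{r}$ exhibits $\pt{}\solRewWithM{s}{\good}{}$ as the unique solution of the derivative system --- but these are refinements of the same argument, not a different one.
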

\noindent
\iftoggle{TR}{% if technical report
The proof is given in \cref{prf:thm:ds_unique}.
}{% else
The proof is given in the extended version~\cite{??}.
\jm{Are we going to upload an extended version soon enough?}
}

From a computational point, we notice that computing $\pt{}\solRewWithM{s}{\good}{}$ by solving the equation system (over the field of rational functions $\R(V)$) is intractable, as this function may be exponential in the number of parameters.
Matters appear worse as we aim to compute the derivative w.r.t.\ to a subset of the parameters $V' \subseteq V$, rather than with respect to a single parameter.
However, we observe that, for a gradient descent, we are only interested in computing  $\left(\pt{}\solRewWithM{s}{\good}{}\right)[u]$, and the equation system can be solved efficiently when we substitute all $\Prob(s, s')$ by $\Prob(s, s')[u]$ and solve for $\left(\pt{}\solRewWithM{s}{\good}{}\right)[u]$ using constant coefficients from the rationals or reals\footnote{In our implementation, we support exact rationals or floating point arithmetic.}.
Furthermore, as the $\rv{x_s}$ variables can be solved independently of the $\mb{\pt{} x_s}$ variables, we first solve the $\rv{x_s}$-equation system with $|S|$ variables and equations. In a second step, we construct for every $p \in V'$ an equation system (with $|S|$ variables and equations) by directly substituting the $\rv{x_s}$ variables with the expected reward $\solRewWithM{s}{\good}{}[u]$.
In total, this means that we evaluate $(|V'| + 1)$ equation systems with $|S|$ equations and variables each.

\subsection{Derived Automaton}
We now show that an alternative way to obtain \(\pt{}\solRewWithM{s}{\good}{}\) is by the standard equation system for \solRewWithM{s}{\good}{} on the \enquote{derivative} of pMC \pdtmc.
To that end, we mildly generalize pMCs to (parametric) weighted automata~\cite{DKV09} and show that we can describe \enquote{taking the derivative} as an operation on these weighted automata.
We do so by relaxing our parametric probability distributions by dropping the requirement that $0 \leq \distFunc(\distDomElem)[u] \leq 1$; in particular, negative real values are allowed.
These functions are called quasi-distributions as $\sum_{\distDomElem\in\distDom}\distFunc(\distDomElem)[u]=1$ still holds.
Let $\wfaDistr(\distDom) \subset \Vect(X)$ denote the set of quasi-distributions.
%\jm{I don't really get why we keep a restriction on the WFA. Sure, the given construction results in such a function. But do we want it in the definition of WFA?}

\begin{definition} \label{def:wfa}
	A \emph{weighted finite automaton (WFA)} is a tuple $\wfaInit$ where $S$, $\sinit$, $T$, $\Paramvar$ are as in \cref{def:pmc} and $\wfaTrans \colon S \to \wfaDistr(S)$.
\end{definition}

\begin{example}
\Cref{fig:wfa} depicts WFA $\wfa$ with single parameter $p$.
Note that some of the transitions are labelled with $p$ and $1{-}p$ (as in \cref{fig:pmc}).
We will later explain the relation of this WFA to the pMC in \cref{fig:pmc}.
\end{example}
Instead of creating a system of equations to compute the derivative, we can alternatively construct an automaton which has the derivative as its semantics.
This is called the \emph{\dpmctxt}.
Intuitively, the automaton \(\dervar{p} \pdtmc\) of a pMC \(\pdtmc\) is constructed by applying product and sum rules directly to \(\MC\).%
% NOTE: I don't think this is very important and kinda obvious for the people who know about it. -Joshua
%\footnote{
%	Using the semiring of the polynomials with a derivation operation, the \dpmctxt can be
%	defined more generally.
%}

\begin{definition}\label{def:dpmc}
Let \(\pDtmcInit\) be a pMC with reward function \(\rewFunction\) and let \(p \in \Paramvar\) a parameter.
The \emph{\dpmctxt} of \pdtmc w.r.t. p is the WFA \(\dpmcInitPrime\) with the reward function \(\rewFunction'\) where
	\begin{compactitem}
		\item $S' = S \, \dot{\cup} \, \pt{}  S$ with $\pt{}  S = \{\, \pt{}  s \mid s \in S \, \}$,
		\item the transition function $\dpmcTrans$ is given by: \begin{align*}
		\dpmcTrans(s, t) \ = \ \begin{cases}
			\probdtmc(s, t) & \text{if } s, t \in S,\\
			\probdtmc(s',t') & \text{if } s, t \in \pt{}  S \text{ and } s=\pt{}s' \text{ and } t=\pt{}t',\\
			\pt{}  \probdtmc(s', t) & \text{if } s \in \dervar{p} S \text{ and } s = \pt{} s' \text{ and } t \in S,\\
			0 & \text{otherwise,}
%			0 & \text{if } s \in S \text{ and } t \in \dervar{p} S,
		\end{cases}
		\end{align*}
		\item the reward function $\rewFunction'$ is given by $\rewFunction'(s) = \rewFunction(s)$ for $s \in S$ and $\rewFunction'(s) = 0$ for $s \in \dervar{p} S$.
	\end{compactitem}
\end{definition}

The intuition behind this derived automaton is as follows.
\enquote{Deriving} the state \(\rv{s} \in S\) with respect to \(p \in V\) yields the new state \(\mbdervar{p}{s}\).
For every transition \(\Prob(\rv{s}, \rv{s'}) \neq 0\) for \(\rv{s}, \rv{s'}\in	S\), we \enquote{use the product rule} and add the transitions \(\Prob(\mbdervar{p}{s}, \mbdervar{p}{s'}) = \Prob(\rv{s}, \rv{s'}) \) and \(\Prob(\mbdervar{p}{s}, \rv{s'}) = \dervar{p} \Prob(\rv{s}, \rv{s'})\) to \(\dervar{p} \pdtmc\).

\begin{example}
Applying \cref{def:dpmc} to the pMC \pdtmc from \cref{fig:pmc} results in the
\dpmctxt \dpmc in \Cref{fig:dpmc}.
\end{example}

Note that although \dpmc is not a pMC as some transitions have negative weights, the parametric expected reward $\solRewWithM{\pt{} \sinit}{\good}{\dpmc}$ can be computed as in~\cref{def:soed} as we restrict ourselves to graph-preserving regions, ensuring continuously differentiability of $\solRewWithM{s}{\good}{\pdtmc}$.
The derivative of the expected reward in \pdtmc %, denoted as $\pt{}\solRewWithM{\sinit}{\good}{\pdtmc}$,
can now be obtained as the parametric expected reward ($\solRewWithM{\pt{} \sinit}{\good}{\dpmc}$) in \dpmc.

\begin{proposition}
For each pMC \pdtmc we have: $\solRewWithM{\pt{} \sinit}{\good}{\dpmc} \ = \  \pt{}\solRewWithM{ \sinit}{\good}{\pdtmc}.$
\end{proposition}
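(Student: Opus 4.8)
The plan is to recognise the standard expected-reward equation system of the WFA $\dpmc$ as \emph{verbatim} the two-block system of \cref{def:soed}, and then to conclude via \cref{thm:ds_unique}. Recall that the expected reward of a WFA is the unique solution of the same linear equation system used for MCs (the $\rv{x}$-block of \cref{def:soed}), now taken over the state space $S'$ with the reward $\rewFunction'$ and the transition function $E$ in place of $\probdtmc$; this is well-defined on graph-preserving regions, exactly as already invoked in the text to justify that $\solRewWithM{\pt{}\sinit}{\good}{\dpmc}$ can be computed as in \cref{def:soed}. Write $z_v$ for the value of this system at $v \in S' = S \,\dot{\cup}\, \pt{} S$, so that $z_v = \solRewWithM{v}{\good}{\dpmc}$; in particular $z_{\good}=0$, and since $\pt{}\good$ is absorbing with $\rewFunction'(\pt{}\good)=0$ we read off $z_{\pt{}\good}=0$ as well.

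Next I would split the defining equations of $z$ along the partition $S' = S \,\dot{\cup}\, \pt{} S$ and simplify the coefficients using \cref{def:dpmc}. For a non-target state $s \in S$ we have $\rewFunction'(s)=\rewFunction(s)$, $E(s,t)=\probdtmc(s,t)$ for $t \in S$, and $E(s,\pt{}t)=0$; for a derived state $\pt{}s$ with $s \in S \setminus\{\good\}$ we have $\rewFunction'(\pt{}s)=0$, $E(\pt{}s,\pt{}t)=\probdtmc(s,t)$, and $E(\pt{}s,t)=\pt{}\probdtmc(s,t)$. Hence the equations for $z$ reduce to
\begin{align*}
z_s &= \rewFunction(s) + \sum_{s' \in S} \probdtmc(s,s')\, z_{s'}, \\
z_{\pt{} s} &= \sum_{s' \in S} \bigl( \pt{}\probdtmc(s,s')\, z_{s'} + \probdtmc(s,s')\, z_{\pt{} s'} \bigr),
\end{align*}
together with $z_{\good}=z_{\pt{}\good}=0$. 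Under the identification $z_s \leftrightarrow \rv{x_s}$ and $z_{\pt{}s} \leftrightarrow \mb{\pt{}x_s}$ this is exactly the system of \cref{def:soed}.

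Finally I would invoke \cref{thm:ds_unique}: that system has a unique solution with $\rv{x_s}=\solRew{s}{\good}$ and $\mb{\pt{}x_s}=\pt{}\solRewWithM{s}{\good}{\pdtmc}$. Matching solutions yields $z_{\pt{}s}=\pt{}\solRewWithM{s}{\good}{\pdtmc}$ for every $s\in S$, and instantiating at $s=\sinit$ gives $\solRewWithM{\pt{}\sinit}{\good}{\dpmc}=z_{\pt{}\sinit}=\pt{}\solRewWithM{\sinit}{\good}{\pdtmc}$, as claimed.

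The per-case simplification of $E$ is routine bookkeeping; the one genuine subtlety, and the step I would treat most carefully, is the well-definedness of the WFA's expected reward and the boundary treatment at $\pt{}\good$. Because $\dpmc$ carries negative weights it is not a probability space, so $\solRewWithM{\pt{}\sinit}{\good}{\dpmc}$ must be read as the solution of the linear system rather than through a path measure; and because $\pt{}\good$ is absorbing with no edge to $\good$, its value is not pinned down by the recursion and must be fixed to $0$. This is precisely the boundary condition $\mb{\pt{}x_{\good}}=0$ of \cref{def:soed} (the derivative of the constant reward-to-go $0$ at $\good$), so the two systems also agree on the boundary, and graph-preservation ensures the non-boundary block is nonsingular so that the unique solution exists.
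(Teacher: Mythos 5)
Your proof is correct and takes essentially the same route the paper itself uses (and largely leaves implicit): unfold the standard expected-reward equation system of the WFA over $S \,\dot{\cup}\, \pt{} S$, check via \cref{def:dpmc} that it coincides verbatim with the two-block system of \cref{def:soed}, and conclude by the uniqueness statement of \cref{thm:ds_unique}. Your explicit handling of the boundary condition at $\pt{}\good$ (whose value is not determined by the recursion and must be fixed to $0$, matching $\mb{\pt{}x_{\good}}=0$) is a point the paper glosses over and is worth keeping.
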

\noindent
Stated in words, the expected reward of the derived automaton $\pt{\pdtmc}$ equals the partial derivative of the expected reward of the pMC \pdtmc.

%%\sj{Why is it good to have this automaton?}
%%\lh{It's cool}
%%\sj{Ideas: Can we apply bisim on this? Is that ever a good idea? Questions for the future: What about PLA on this?}

\section{Gradient Descent}
\label{sec:gd}
Gradient descent (\gd) is a first-order\footnote{It is only based on the first derivative and not on higher ones.} optimization technique to maximize an objective function $f(u)$.
It updates the \gd parameters in the direction of its gradient $\pt f(u)$.
We want to use \gd to solve the problem introduced in \cref{subsec:problemstatement}, \ie, given $\lambda \geq 0$, and comparison operator $\sim$, find an instantiation $u \in R$ with: $\solRew{s_0}{\good}[u] \sim \lambda$.

We consider several \gd update methods (\plaingd, \momentum \gd~\cite{DBLP:books/lib/Rumelhart89}, and Nesterov accelerated \gd~\cite{nesterov1983method, DBLP:conf/icml/SutskeverMDH13}, \rmsprop~\cite{RMSProp}, \adam~\cite{DBLP:journals/corr/KingmaB14}, and \radam~\cite{DBLP:conf/iclr/LiuJHCLG020}).
Three variants of \gd are common in the literature.
\emph{Batch \gd} computes the gradient of $f$ w.r.t. all parameters.
In contrast, \emph{stochastic \gd} performs updates for each parameter separately.
\emph{Mini-batch \gd} sits in between and performs an update for a subset of parameters.
We describe the \gd update methods w.r.t.\ stochastic \gd, \ie, at step $t$ we update the instantiation at parameter $p_{i(t)}$, while the other valuations remain the same.
We update the parameters in round-robin fashion: $i(t) = t \mod |V|$.
Clearly, stochastic \gd can be extended to mini-batch and batch \gd, by updating more/all parameters at the same time.
We assume that the objective function $f$, starting instantiation $u$, and bound $\lambda$ are given and focus on ${\sim} = {>}$.
\Cref{alg:gd} shows the algorithm to find a feasible solution.
First of all, we discuss Plain \gd, after which we consider other existing \gd update methods.
Finally, we discuss several region restriction methods to deal with parameter regions.
%All \gd updateand region restriction methods originate from literature.

\begin{algorithm}[t]
	\caption{\gd}
	\label{alg:gd}
	\begin{algorithmic}[1]
		\WHILE {$f[u] \leq \lambda$}
		\IF {$u$ is a local optimum}
		\STATE  pick new $u$  \label{alg:line:randomu}
		\ENDIF
		\STATE update $u$ with GD-method \label{alg:line:gdmethod}
		\ENDWHILE
		\RETURN $u$
	\end{algorithmic}
\end{algorithm}

\subsection{\plaingd}
\emph{\plaingd} is the simplest type of \gd.
A fixed learning rate $\eta$ is used to determine the step size taken to reach a (local) maximum.
The parameter $p_i$ gets updated in $u$ based on $\pti f[u]$ as follows:
\[
	u^{t+1}_{i} = u^{t}_{i} + \eta \cdot \pti{f}[u^t_i] ,
\]
where $u^t_i = u^t(p_i)$, i.e., the value of $p_i$ with instantiation $u^t$.
\begin{figure}[t]
	\centering
\subfigure[Plain]{
\scalebox{0.3}{
	\begin{tikzpicture}
		\begin{axis}[axis lines=left, domain=0:3, restrict y to domain=0:6.5, xmin=0, xmax=3, ymin=0, ymax=6.5,xlabel=$p$, ylabel=$f(p)$, tick label style={font=\huge}, xlabel style={font= \huge, yshift=-10pt}, ylabel style={font=\huge}, xtick={0, 1, 2, 3}, minor xtick= {0.5, 1.5, 2.5}, minor ytick={1,3,5}]
		\addplot[mark=none, smooth] {0.5*x^4-4*x^3+9*x^2-4*x+2};
		\addplot[only marks, mark=halfcircle*,mark options={scale=2, fill=red},text mark as node=true,point meta=explicit symbolic,nodes near coords, every node near coord/.append style={yshift=0.1cm, xshift=-0.5cm}] coordinates {
			(1, 3.5) [\LARGE$t=0$]
			(1.4, 4.99)  [\LARGE$t=1$]
			(1.72, 5.7678) [\LARGE$t=2$] };
		\addplot[only marks, mark=halfcircle*,mark options={scale=2, fill=red},text mark as node=true,point meta=explicit symbolic,nodes near coords, every node near coord/.append style={yshift=0.2cm, xshift=0.5cm}] coordinates {
			(1.882177, 5.95845) [\LARGE $t=3$]};
		\addplot[only marks, mark=*,mark options={scale=2, fill=blue},text mark as node=true,point meta=explicit symbolic,nodes near coords, every node near coord/.append style={yshift=0.1cm, xshift=-0.5cm}] coordinates {					(2,6) 				};
		\end{axis}

	\end{tikzpicture}}
\label{fig:plain}
}~
\subfigure[\momentum]{
\label{fig:Momentum}
\scalebox{0.3}{
		\begin{tikzpicture}
			\begin{axis}[axis lines=left, domain=0:3, restrict y to domain=0:6.5, xmin=0, xmax=3, ymin=0, ymax=6.5,xlabel=$p$, ylabel=$f(p)$, tick label style={font=\huge}, xlabel style={font= \huge, yshift=-10pt}, ylabel style={font=\huge}, xtick={0, 1, 2, 3}, minor xtick= {0.5, 1.5, 2.5}, minor ytick={1,3,5}]
				\addplot[mark=none, smooth] {0.5*x^4-4*x^3+9*x^2-4*x+2};
				\addplot[only marks, mark=halfcircle*, mark options={scale=2, fill=red},text mark as node=true,point meta=explicit symbolic,nodes near coords, every node near coord/.append style={yshift=0.1cm, xshift=-0.5cm}] coordinates {
					(1, 3.5) [\LARGE $t=0$]
					(1.4, 4.99)  [\LARGE $t=1$]
					(2.0768, 5.98232) [\LARGE $t=2$]
					};
				\addplot[only marks, mark=*,mark options={scale=2, fill=blue},text mark as node=true,point meta=explicit symbolic,nodes near coords, every node near coord/.append style={yshift=0.1cm, xshift=-0.5cm}] coordinates {					(2,6) 				};

			\end{axis}
		\end{tikzpicture}}
}~
\subfigure[Nesterov accelerated]{
		\label{fig:nesterov}
		\scalebox{0.3}{
		\begin{tikzpicture}
			\begin{axis}[axis lines=left, domain=0:3, restrict y to domain=0:6.5, xmin=0, xmax=3, ymin=0, ymax=6.5,xlabel=$p$, ylabel=$f(p)$, tick label style={font=\huge}, xlabel style={font= \huge, yshift=-10pt}, ylabel style={font=\huge}, xtick={0, 1, 2, 3}, minor xtick= {0.5, 1.5, 2.5}, minor ytick={1,3,5}]
				\addplot[mark=none, smooth] {0.5*x^4-4*x^3+9*x^2-4*x+2};
				\addplot[only marks, mark=halfcircle*,mark options={scale=2, fill=red},text mark as node=true,point meta=explicit symbolic,nodes near coords, every node near coord/.append style={yshift=0.1cm, xshift=-0.5cm}] coordinates {
					(1, 3.5) [\LARGE $t=0$]
					(1.4, 4.99)  [\LARGE $t=1$]
					(1.901235, 5.97078) [\LARGE $t=2$] };
				\addplot[only marks, mark=*,mark options={scale=2, fill=blue},text mark as node=true,point meta=explicit symbolic,nodes near coords, every node near coord/.append style={yshift=0.1cm, xshift=-0.5cm}] coordinates {					(2,6) 				};
			\end{axis}
		\end{tikzpicture}
	}
}
%\subfigure[RMSProp]{
%\label{fig:rmsprop}
%\scalebox{0.5}{
%	\begin{tikzpicture}
%		\begin{axis}[axis lines=left, domain=0:3, restrict y to domain=0:6.5, xmin=0, xmax=3, ymin=0, ymax=6.5,xlabel=$p$, ylabel=$f(p)$, tick label style={font=\Large}, xlabel style={font= \Large, yshift=-10pt}, ylabel style={font=\Large}]
%			\addplot[mark=none, smooth] {0.5*x^4-4*x^3+9*x^2-4*x+2};
%			\addplot[only marks, mark=*,mark options={scale=2, fill=red},text mark as node=true,point meta=explicit symbolic,nodes near coords, every node near coord/.append style={yshift=0.1cm, xshift=-0.3cm}] coordinates {
%				(1, 3.5) [$t=0$]
%				(1.3162288, 4.70667)  [$t=1$]
%				(1.52940, 5.36013) [$t=2$]
%				(1.67891, 5.69602) [$t=3$]
%				(1.78458, 5.86186) [$t=4$]};
%		\end{axis}
%	\end{tikzpicture}
%}
%}
\caption{Different GD methods on $f$ for $R=[0,3]$}
\end{figure}
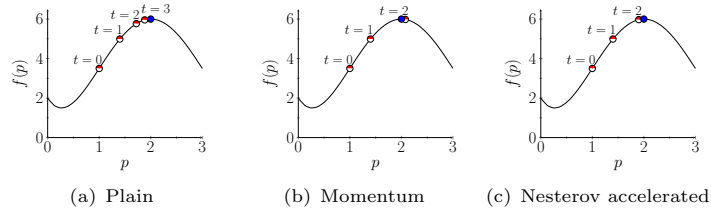

\begin{example}
	\label{ex:plain}
	Consider $f(p) = \frac{1}{2}p^4 - 4 p^3 + 9 p^2 - 4p + 2$ on a region $R = [0,3]$.
	Assume that our initial instantiation is $u^0(p) = 1$ and that we take $\eta = 0.1$ and $\lambda =
	5.9$.
	The red halfdots in~\cref{fig:plain} illustrate how the value of $p$ changes over time when using Plain \gd.
	The blue dot indicates the optimum.
	At $t=0$, the gradient is $4$ and so $p$ is updated to $1.4$.
	For $t=1$, the gradient is $3.17$, increasing $p$ again.
	This is repeated until at $t=3$, we have $f[u^t] = 5.96$.
	As this value exceeds $\lambda$, a feasible instantiation ($p=2.08$) is found.
%	gradient
%	t = 0: p = 1\\
%	t = 1: gradient = 4 mu = 0.1 -> 0.4 --> p = 1.4 f(p) = 4.99\\
%	t = 2: gradient = 3.168 mu = 0.1 -> 0.3168 --> p = 1.7168 f(p) = 5.7618\\
%	t = 3: gradient = 1.65377 mu =0.1 -> 0.165377 --> p=1.882177 f(p) = 5.95845
\end{example}

\subsection{GD Update Methods}
Intuitively, all \gd methods attempt to \enquote{guess} how the gradient will change by guiding the search for maxima based upon the past behaviour of the gradient.
Many \gd optimization methods exist and a recent overview is given by Ruder~\cite{ruder2016overview}.
We consider the following methods: \momentum, Nesterov accelerated \gd (\nag), \rmsprop, \adam, and \radam.
\momentum and \nag are classical and very similar to \plaingd.
The latter three are adaptive algorithms, \ie, their learning rate is changing over time and each parameter has its own learning rate.
Parameters with larger gradients have smaller learning rates than the ones with smaller gradients.
The latter three have been developed for machine learning purposes~\cite{DBLP:conf/iclr/LiuJHCLG020}.
We will elaborate on the \momentum and \nag method and briefly sketch the other methods.

\paragraph{\momentum~\cite{DBLP:books/lib/Rumelhart89}.}
Instead of only considering the current derivative, the \momentum method also takes into consideration previous derivatives.
They are weighted by the average decay factor $\gamma \in [0,1)$ (typically at least $0.9$).
This method uses an additional update vector $v$.
\momentum \gd adjusts the parameter value according to the following equation.
(Note that, if $\gamma = 0$, \momentum \gd is equal to \plaingd.)

\begin{align}
	v^{t+1}_{i} &\ = \ \gamma \cdot v^t_i + \eta \cdot \pti{f}[u^t_i] \\
	u^{t+1}_i & \ = \ u^t_i + v^{t+1}_i . \label{eq:Momentum:update}
\end{align}

\paragraph{Nesterov accelerated \gd (\nag)~\cite{nesterov1983method, DBLP:conf/icml/SutskeverMDH13}.}
As for \momentum \gd, \nag weighs the past steps by $\gamma$.
Additionally, it attempts to predict the future by guessing the next instantiation of $u$, denoted $u'$ (\cref{eq:nesterov:next}).
This should prevent us from moving to the other side of the optimum (\cref{ex:nesterov}).
As for \momentum, the instantiation is updated according to \cref{eq:Momentum:update}, whereas the update vector is obtained as in \cref{eq:nesterov}:
\begin{align}
	u'_j &\ = \ \begin{cases}
		u^t_j - \gamma \cdot v^t_j &\text{if } j=i\\
		u^t_j &\text{otherwise}
	\end{cases}\label{eq:nesterov:next}\\
	v^{t+1}_i &\ = \ \gamma \cdot v^t_i + \eta \cdot \pti{f}[u']. \label{eq:nesterov}
\end{align}

\begin{example}
	\label{ex:nesterov}
	Reconsider our running example.
	\Cref{fig:Momentum,fig:nesterov} show how the value of $p$ changes over time using \momentum \gd and \nag respectively.
	Note that for both methods we need one step less compared to \plaingd, i.e., a feasible instantiation is found at $t=2$.
	This is due to taking results of previous steps into account.
	Furthermore, observe that for \momentum \gd at $t=2$ the instantiation of $p$ actually passed the optimum, whereas for \nag this does not occur.
\end{example}

\paragraph{Adaptive methods.}
\emph{\rmsprop} (Root Mean Square Propagation)~\cite{RMSProp} is akin to \momentum and \nag, but its learning rate is adapted based on the previous squared gradient (\cref{eq:rmsprop:update}).
This squared gradient is recursively defined as the sum of $\beta \in [0, 1)$ times the past squared gradient, and $1-\beta$ times the current squared gradient (\cref{eq:rmsprop}).
$\beta$ is called the squared average decay.
In~\cref{eq:rmsprop:update}
a small amount $\epsilon > 0$ is added to the update vector at $p_i$ to avoid division by zero.

\begin{align}
	v^{t+1}_i &\ = \ \beta \cdot v^{t}_i + \left(1-\beta \right) \cdot \left(\pti{f[u]}\right)^2 \label{eq:rmsprop}\\
	u_i^{t+1} & \ = \ u^t_i + \dfrac{\mu}{\sqrt{v^t_i + \epsilon}} \cdot \pti{f[u]}. \label{eq:rmsprop:update}
\end{align}

In addition to the mean, \emph{\adam} (Adaptive Moment Estimation)~\cite{DBLP:journals/corr/KingmaB14} takes the second moment (the uncentered variance) of the gradients into account.
\emph{\radam} (Rectified \adam)~\cite{DBLP:conf/iclr/LiuJHCLG020} solves an issue with \adam in which the variance of learning rate is too large in the initial steps of the algorithm.

%In addition to the mean, \emph{\adam} (Adaptive Moment Estimation)~\cite{DBLP:journals/corr/KingmaB14} takes the second moment (the uncentered variance) of the gradients into account ($m_t$).
%
%\begin{align}
%	m^t_i &= \beta \cdot m^{t-1}_i + \left(1-\beta \right) \cdot \pti{f[u]} \label{eq:adam:m}\\
%	u_i &= u_i + \dfrac{\mu}{\sqrt{\hat{v}^t_i + \epsilon}} \cdot \hat{m}^t \label{eq:adam:update}
%\end{align}
%
%As \adam gets biased at both the initial time steps, and when $\beta$ and $\gamma$ are close to zero, $\hat{v}$ and $\hat{m}$ are computed as follows:
%\begin{align}
%	\hat{v}^t = \dfrac{v^t}{1-\gamma^t},\qquad  \hat{m}^t = \dfrac{m^t}{1-\beta^t},
%\end{align}
%where \(\gamma^t\ (\beta^t)\) denotes \(\gamma\ (\beta)\) to the power $t$.
%
%\emph{\radam} (Rectified \adam)~\cite{DBLP:conf/iclr/LiuJHCLG020} solves an issue with \adam in which the variance of learning rate is too large in the initial steps of the algorithm.

\paragraph{Sign methods~\cite{DBLP:journals/isci/MoulayLP19}.}
For the non-adaptive methods, we additionally implemented variants that only respect the signs of the gradients and not their magnitudes.
That is, we update the parameter as
\[
	u^{t+1}_{i} = u^{t}_{i} + \eta \cdot \text{sgn}(\pti{f}[u^t]) .
\]
Note that this implies we don't need to calculate the full gradient.

\subsection{Dealing with parameter regions}
So far we dealt with unconstrained \gd.
However, as a graph-preserving region $R$ is given, we need to deal with parameter values getting out of $R$. To do so, we discuss the following methods: Projection, Penalty Function, Barrier Function, and logistic Function.
Recall that, $R_i = [\text{\it lb}_i, \text{\it ub}_i]$ denotes the bound for parameter $p_i$ in region $R$.

\paragraph{Projection.}
The projection method acts as a hard wall around the region.
As soon as $u_i \not \in R_i$, $u_i$ gets set to the bound of the region, \ie,
${u_i^t}' = min(max(u_i^t, \text{\it lb}_i), \text{\it ub}_i)$.
Furthermore, if the parameter $p_i$ got out of the given region, we set its past gradients to 0, \ie $v_i^{t+1} = 0$.
%We refer to these methods as \projectionwith and \projectionwithout respectively.
\begin{example}
	\label{ex:projection}
	Reconsider our running example.
	However, now consider region $R' = [0.5,1.5]$.
	For $t=0$, the gradient is $4$, and $p$ is updated to $1.4$.
	For $t=1$, the gradient is $3.17$,
	yielding $p$ to be updated to $1.72$.
	As this is out of the region $R'$, $p$ is projected to $1.5$.
%	For \projectionwith, the gradient is projected to $0$.
\end{example}
\paragraph{Penalty function.}
The penalty method~\cite{smith1997penalty} transforms the constrained problem into an unconstrained one, by adding a penalty function to $f[u^t]$.
This penalty depends on how bad the violation is, e.g. what the difference is between $u_i$ and the bounds of $R_i$.
It can be interpreted as a red warning zone \emph{outside} of the region.
As this might yield non-graph-preserving instantiations, we do not further look into this.
%Let $\text{\it ub}_i$ $(\text{\it lb}_i)$ denote the upper (lower) bound for parameter $p_i$.
%\begin{align*}
%	\penfunc[t] &= \penmult \cdot \penvio[t] \\
%	dist = min(max())
%	\penvio[t] &= \begin{dcases}
%		0 & \text{ if } u^t \in R \\
%		? & otherwise
%	\end{dcases} \\
%\end{align*}
%
%The penalty function consists of the measure of violation ($\penvio$) and the penalty parameter ($\penmult$).

\paragraph{Barrier function.}
The barrier function~\cite{DBLP:books/daglib/0095265} (also called indicator function) works as a soft wall inside of the region, discouraging one to get to close to the wall.
It is independent of how bad the violation is.
%As the penalty function, the barrier function adds a term to $f[u^t]$.
%As with projection, the barrier function ensures our instantiation remains within the region.
We consider the log-barrier function for maximizing $f$ (see Eqs. (\ref{eq:barrier})-(\ref{eq:barrier:otherwise}))\footnote{When considering a minimization problem, $\barfunc$ is subtracted from $f$.}, as this yields a differentiable function.
The barrier function is weighted by $\mu \in [0,1]$.
The equations are:
\begin{align}
	f[u^t] &= f[u^t] + \mu \cdot \barfunc[u^t] \label{eq:barrier:f}\\
	\pti{f'}[u^t] &= \pti{f}[u^t] + \mu\cdot \pti{\barfunc}[u^t]\\
	\barfunc[u^t] &= \sum_i \barfunc_i[u^t]\label{eq:barrier}
\end{align}
%Moreover, we obtain \(f'[u^t] = f[u^t] + \mu \cdot \barfunc[u^t] \) and  $\pti{f'}[u^t] = \pti{f}[u^t] + \mu\cdot \pti{\barfunc^t}$, where $\barfunc[u^t]$ as defined in
% \ie, if $u_i^t \not \in R_i$, we set $\barfunc_i[u^t]$ to $-\infty$
%\begin{align}
%
%\end{align}
\begin{subnumcases}{\barfunc_i[u] = }
	\log(u_i - \text{\it lb}_i) & if $\text{\it lb}_i + \frac{\text{\it ub}_i - \text{\it lb}_i}{2} < u_i$ and $u_i \in R_i$ \label{eq:barrier:lb}
	\\
	\log(\text{\it ub}_i - u_i) & if $\text{\it lb}_i + \frac{\text{\it ub}_i - \text{\it lb}_i}{2} \geq u_i$ and $u_i \in R_i$ \label{eq:barrier:ub}
	\\
	-\infty & otherwise. \label{eq:barrier:otherwise}
\end{subnumcases}
\begin{subnumcases}{\pti{\barfunc_i}[u] = }
	\frac{1}{u_i - \text{\it lb}_i} & if $\text{\it lb}_i + \frac{\text{\it ub}_i - \text{\it lb}_i}{2} < u_i$ and $u_i \in R_i$ \label{eq:barrier:der:lb}
	\\
	\frac{1}{\text{\it ub}_i - u_i} & if $\text{\it lb}_i + \frac{\text{\it ub}_i - \text{\it lb}_i}{2} \geq u_i$ and $u_i \in R_i$ \label{eq:barrier:der:ub}
	\\
	\infty & otherwise. \label{eq:barrier:der:otherwise}
\end{subnumcases}

%Observe that when $\mu\rightarrow 0$, $\mu \cdot \barfunc^t \rightarrow -\infty$.
%Note that when the barrier function returns $-\infty$, we do not need to calculate the gradient.
Note that for higher learning rates, the barrier function might not be strong enough to prevent $u_i \not \in R$, see also the upcoming example.

\begin{example}
	\label{ex:barrier}
	Reconsider our running example with $\mu = 0.1$.
	We observe that at all $t$ where $u_i \in R_i$ case \cref{eq:barrier:ub} applies, so the barrier function is given by $\barfunc^t = \log(1.5-p)$.
	For learning rate $0.1$, at $t=0$, the gradient is $4 - \mu \cdot \frac{1}{1.5-p}$, so $p$ is updated to $1.38$.
	For $t=1$, the gradient is $0.24$. So $p$ is updated to $1.62$, which is outside region $R'$.
	When considering a smaller learning rate, \eg $0.01$, at $t=0$ $p$ is updated to $1.038$.
	This converges around $t=30$ with $p\approx1.46\in R'$.

\end{example}

\paragraph{Logistic function.}
For the logistic function, we map each restricted parameter $p_i$ to unrestricted parameter $q_i$ by using a sigmoid function~\cite{DBLP:conf/iwann/HanM95} (see~\cref{eq:sigmoid}) tailored to $R_i$.
We denote instantiations of $q$ with $u'$.
$u'_{i,0}$ is the value of the sigmoid's midpoint.
$u'$ gets updated according to the \gd method.
The gradient ($v'_i$) at $u'$ is computed according to~\cref{eq:sigmoid:gradient}. %Finally, the next instantiation of $u$ is obtained as in~\cref{eq:sigmoid:u},
\begin{align}
	u'_{i,0} &= \dfrac{\text{\it ub}_i - \text{\it lb}_i}{2} \nonumber \\
	u_i &= \frac{\text{\it ub}_i - \text{\it lb}_i}{1 + e ^{-(u'_i - u'_{i,0})}} + \text{\it lb}_i \label{eq:sigmoid}\\
	v'_i[u'] &= \frac{e ^{u'_i} \cdot v_i[u]}{(1+e^{u'_i})^2}.
	\label{eq:sigmoid:gradient}
\end{align}

\begin{example}
	\label{ex:logistic}
	Reconsider our running example.
	Let the learning rate be 0.1, and ${u}^{\prime 0}(q)=0.5$. The sigmoids midpoint is
	$u'_{i,0} = 0.5$.
%region is van 0.5-1.5
	For $t=0$, we have $u_i^0 = 1$.
	The gradient at this point ${v}_{\phantom{\prime}i}^{\prime 0}[{u}^{\prime 0}] = 0.94$, so $q$ is updated to $0.59$.
	Therefore, $p$ is set to $1.02$.
	At each iteration $p$ and $q$ get updated.
	E.g. at $t=100$, $q=3.63$ and $p=1.45$.

	\end{example}

\section{Empirical Evaluation}
\label{sec:empirical}
We implemented all gradient descent methods from \cref{sec:gd} in the probabilistic model checker \storm~\cite{DBLP:journals/corr/HenselJKQV20}.
All parameters, \ie batch-size, learning rate, average decay and squared average decay, are configurable via \storm's command line interface.
We evaluate the different gradient descent methods and compare them to two baselines:
One approach based on Quadratically-Constrained Quadratic Programming (\qcqp)~\cite{DBLP:conf/atva/CubuktepeJJKT18}, a convex optimization-based method. and the sampling-based approach Particle Swarm Optimization (\pso)~\cite{DBLP:conf/tase/ChenHHKQ013}.
These baselines are implemented in the tool \prophesy~\cite{DBLP:conf/cav/DehnertJJCVBKA15}.
All methods use the same version of \storm{} for model building, simplification, model checking, and solving of linear equation systems.
We specifically answer the following questions experimentally:
\begin{enumerate}
\item[Q1] Which region restriction method works best?
\item[Q2] Which \gd{} methods works best?
\item[Q3] How does \gd{} compare to previous techniques (\qcqp{} and \pso{})?
\end{enumerate}

%\footnote{Reviewers: The implementation is available at \url{https://github.com/moves-rwth/storm/pull/111/}. We will properly archive these changes via Zenodo.}.

%One sparse matrix is created per parameter and instantiated at the current position.
%Caching\sj{we cache the structure of the matrix as with pla?, I dont think it is clear here what we cache} is enabled to make subsequent calculations easier.
%\lh{Yes, I think so. gmm++ is doing the caching, so I'm not sure what exactly is cached. Otherwise we can just leave out the sentence.}
%\sj{Do you construct the equation system every time from scratch? :-)}

\subsection{Set-up}

We took the approach as described in \cref{subsec:equationsystem}, \ie, one sparse matrix is created per parameter and instantiated at the current position.
Our implementation works with Mini-Batch \gd as described above.
This means that we compute the derivative w.r.t. \(k\) parameters and then perform one step.
We allow for stochastic \gd and batch \gd by setting $k$ to $1$ or $|V|$, respectively.
%Also the region restriction method is set via the command line.

For the experiments, we solve equation systems with GMRES from the \texttt{gmm++} linear equation solver library included in \storm, which uses floating-point arithmetic.
All experiments run on a single thread and perform some preprocessing (\eg bisimulation minimization).
The times reported are the runtimes for \gd, \pso and \qcqp and do not include preprocessing.
We set a time-out of two hours.
We have used machines with an Intel Xeon Platinum 8160 CPU and 32GB of RAM.
In the comparisons with \qcqp and \pso, we report the average runtime over five runs.

%For simplicity, we have measured the entire user and system runtime of \storm and \prophesy, including model loading and bisimulation minimization.

\paragraph{Settings.}
For all constants except the learning rate, we chose the default from the literature (e.g.,
\cite{ruder2016overview, RMSProp, DBLP:journals/corr/KingmaB14,DBLP:conf/iclr/LiuJHCLG020}), \ie we set the batch size \(k\) to \(32\), average decay \(\gamma\) to \(0.9\) and squared average decay \(\beta\) to \(0.999\).
Whereas in the literature the learning rate is often set between \(0.001\) and \(0.1\), we stick to \(0.1\).
As we are interested in finding a feasible instantiation, we can take the risk of jumping over a local optimum due to a too high learning rate.
Also, our experiments show that lower learning rates slow down the search process (see~\cref{fig:restriction:all}).
Furthermore, we start at \(u_i = 0.5 + \varepsilon\) for all parameter $p_i$ with \(\varepsilon = 10^{-6}\), to overcome possible saddle points at \(p_i = 0.5\).
%This \(\varepsilon\) is introduced because MCs exist where \(p_i=0.5\) is a saddle point or a local optimum, leaving the initial gradient at zero, even if better points can be found elsewhere.
After every parameter has performed a step of less than \(10^{-6}\) in sequence, we conclude a local optimum has been found (we are aware this is an impatient criterion, tweaking this is a matter for further research).
When an infeasible local optimum is found, a new starting point is selected randomly (see \cref{alg:gd}, \cref{alg:line:randomu}).
Consequently, the \gd methods may yield different runtimes on different invocations on the same benchmark, though in practice we observe only a small deviation in the runtimes.
For the barrier region restriction method, we initially set $\mu$ to $0.1$. If no feasible solution is found, we divided $\mu$ by 10.
We continue this procedure until a feasible solution is found, or $\mu < 10^{-6}$.

\paragraph{Benchmarks.}
We consider pMCs obtained from POMDPs (cf.~\cite{DBLP:conf/uai/Junges0WQWK018}) and Bayesian networks (cf.~\cite{DBLP:conf/qest/SalmaniK20}) with a large number of parameters.
We took at least one variant of all POMDPs with reachability or expected reward properties from~ \cite{DBLP:journals/rts/Norman0Z17, DBLP:conf/atva/BorkJKQ20}, except for the dining cryptographer's protocol which has a constant reachability probability.
Furthermore, we took a medium and large Bayesian network from~\cite{bnrepository}.
We excluded the typical pMC examples~\cite{DBLP:conf/tacas/HartmannsKPQR19} with only two or four parameters.
We observed that for some benchmarks (e.g., \texttt{drone} and \texttt{refuel}) the optimum for some parameters is often at its bound.
We refer to these parameters as \easyparameters.%
\footnote{The feasibility problem remains a combinatorially hard problem, but the presence of easy parameters typically (but not always) indicates that the gradient remains (positive/negative) over the complete space.}

\begin{table}[t]
	\centering
	\caption{Model characteristics}
	\label{tab:model_info}
	{\footnotesize
	\scalebox{0.95}{
	\begin{tabular}{@{}l|lll|rrrr@{}}
		%		\toprule
		&Model			& Bound				& Instance  & States& Trans. 	& \(|V|\)	&\(|V_{\text{easy}}|\) \\
		\midrule
		\multirow{12}{*}{\rotatebox{90}{Reachability Probabilites}}
		&hailfinder		& \(\geq 0.145\)  	& (2000)  	& 1540 	& 324982	& 1249 		& 0 \\
		\cline{2-8}
		&nrp &  \(\leq0.001\) 	& (16,2)	& 787 	& 1602		& 95		& 32 \\
		&&  					& (16,5)	& 5806 	& 11685		& 704		& 340\\
		\cline{2-8}
		&drone
		& \(\geq0.85\)		& (5,1)		& 3678	& 27376		& 756		& 667 \\
		&&					& (5,2)		& 3678	& 27376		& 2640		& 404 \\
		\cline{2-8}
		&\multirow{2}{*}{\shortstack[l]{4x4grid\\\quad-avoid}}
		& \(\geq0.9\)		& (5)		& 1216	& 2495		& 99		& 42\\
		&&					& (10)		& 4931	& 9990		& 399		& 158\\
		\cline{2-8}
		&newgrid			& \(\geq0.99\)  	& (8,10)	& 30191	& 60410		& 399		& 244\\
		&&					& (15,10)	& 98441 & 196910	& 399		& 79\\
		\cline{2-8}
		&child			& \(\leq0.43\)		& (240)		& 243	& 3277		& 223		& 170 \\
		\cline{2-8}
		&refuel			& \(\geq0.35\) 		& (5,3)		& 1564	& 4206		& 452		& 317 \\
		&&					& (8,3)		& 7507	& 21468		& 794		& 570 \\\hline\hline
		\multirow{7}{*}{\rotatebox{90}{Expected Reward}}
		&\multirow{2}{*}{\shortstack[l]{network2\\\quad -prios}}
		&\(\leq0.1\) 		& (8,5, ps) & 397   & 2837		& 140 		& 128 \\
		&& \(\leq3.5\) 		& (8,5, dp) & 2822  & 69688		& 888		& 537 \\
		\cline{2-8}
		&samplerocks
		& 	\(\leq40\)					& (8) 		& 11278 & 25205		& 2844 		& 644 \\
				\cline{2-8}
		&	4x4grid		& \(\leq4.2\)		& (5)		& 1410 	& 2879		& 99		& 38 \\
		&&					& (10)		& 5780	& 11659		& 399		& 177 \\
				\cline{2-8}
		&maze2
		& \(\leq6\)			& (15)		& 5340	& 10799		& 2624		& 1257\\
		&&					& (50)		& 61000	& 121799	& 29749 	& N/A\\
		%		\bottomrule
	\end{tabular}
	}}
\end{table}

\Cref{tab:model_info} shows the benchmarks.
The first seven benchmarks consider reachability properties, whereas the latter four consider expected rewards.
The table includes the required property (Bound) and the instance of the benchmark.
For \texttt{network2-prios}, ``ps'' refers to successfully delivered packets and ``dp'' refers to dropped packets.
For each benchmark we denote the number of states, transitions and parameters after minimization, as well as the number of \easyparameters.
The entry N/A for \easyparameters means that all runs for \gd timed out, therefore, no feasible
instantiation was found and the number of \easyparameters could not be determined.

%\newpage
%\input{table2.tex}
%\input{table3.tex}

%We obtain most of our pMCs from POMDPs cf. \cite{DBLP:conf/uai/Junges0WQWK018} using the POMDP-to-pMC converter

%We took at least one variant of all POMDPs from the supporting material\footnote{\url{http://www.prismmodelchecker.org/files/rts-poptas/}} of \cite{DBLP:journals/rts/Norman0Z17} except for the Dining Cryptographer's Protocol because of its constant reachability probability.
%Additionally, we took at least one variant of every benchmark in the supporting material\footnote{\url{https://github.com/moves-rwth/indefinite-horizon-pomdps}} of \cite{DBLP:conf/atva/BorkJKQ20} that is not also in the previous set of benchmarks.
%
%
%In particular, of the Robot in a grid and the Maze POMDPs, we selected the larger variants, and of the Wireless Network Scheduling POMDP, we selected the smaller variant.

% obtained from \bnrepository~\cite{bnrepository}.

To obtain bounds for the feasible instantiations, we considered values close to known optima from the literature.
For those benchmarks where the optimal was not available, we approximated it by applying \gd several times and picking the optimum solution found.
We checked feasibility against the optimum-bounds, and the relaxed bounds, where we relaxed all bounds by 10\% and 20\%, respectively.
The plots for 10\% are similar to those for 20\% and therefore omitted.
%To evaluate solution quality versus computation time and mitigate effects caused by bounds close to the optimum, we also ran all experiments again where we relaxed all bounds by 10\% and 20\% respectively.

%We compare the performance of \texttt{\momentumsign} with \qcqp and \pso because it yields the strongest results on our set of benchmarks \lh{compare?}.

\subsection{Results}
\label{subsec:results}

Our experiments show that \gd can be used to find feasible parameter instantiations. In the following, we provide the numerical results and then answer the questions Q1--Q3 in the next paragraphs.

\paragraph{Numerical results.}
The scatter plots in \cref{fig:restriction:all} show how the different region restriction methods compare for \momentumsign and \adam.
Point \((x, y)\) denotes that the restriction method \bestrestrictionmethod took \(x\) seconds and the alternative took \(y\) seconds to find a feasible instantiation for the given \gd method.
The scatter plots in \cref{fig:othergd,fig:qcqppso} show how the different \gd methods and the baseline methods \qcqp and \pso (y-axis) compare to \bestmethod (x-axis), respectively.
Note that all scatter plots are log-log scale plots. %\footnote{During review, raw data is available at \url{https://github.com/moves-rwth/gradient-descent-experiments} and will be archived on Zenodo afterwards.}.
Point \((x, y)\) denotes that \bestmethod took \(x\) seconds and the alternative took \(y\) seconds to find a feasible instantiation.
All implicit vertical lines denote the same benchmark.
Points on the TO/MO line denote that the method has timed out or used too much memory and the ERR line denotes that the method has encountered some internal error.
The dashed lines denote differences of a factor 10 and 100.

%\subsection{Main findings}
%rom the results, we also draw conclusions regarding Q1--Q3 posed above. \begin{enumerate}
%\item[@Q1]
%From the discussed restriction methods, \emph{\bestrestrictionmethod} outperforms the barrier function.
%Furthermore, \bestrestrictionmethod performs slightly better than logistic function for \momentumsign.
%\item[@Q2]
%From the discussed \gd methods, \emph{\bestmethod} outperforms the more sophisticated ones.
%\item[@Q3]
%Finally, our experiments show that \emph{\gd is a competitive approach} to the state-of-the-art feasiblity method \qcqp and \pso.
%\end{enumerate}
%

\paragraph{Comparison of region restriction methods.}
\label{subsec:comp:restriction}
%\begin{table}
%	\centering
%	\begin{tabular}{r|l}
%		\textbf{GD method} & \textbf{best region restriction method}\\
%		\hline
%		\plaingd & projected with/without gradient\\
%		\plainsign& projected without gradient\\
%		\momentum & projected with gradient\\
%		\momentumsign & projected with gradient\\
%		Nesterov & projected with gradient\\
%		Nesterov-Sign & projected with gradient\\
%		\adam & projected with gradient \\
%		\radam & projected with gradient \\
%		\rmsprop & projected with gradient
%	\end{tabular}
%\label{tab:restriction}
%\caption{Best region restriction method for \gd methods with learning rate 0.1 }
%\end{table}
\begin{figure}[t]
	\centering
	\subfigure[\momentumsign]{
		\label{fig:restriction:momentum}
		\begin{minipage}{0.4\linewidth}
		% ADD FIGURE \bestrestrictionmethod VS all other methods on best non-adaptive method
			\includegraphics[width=\linewidth]{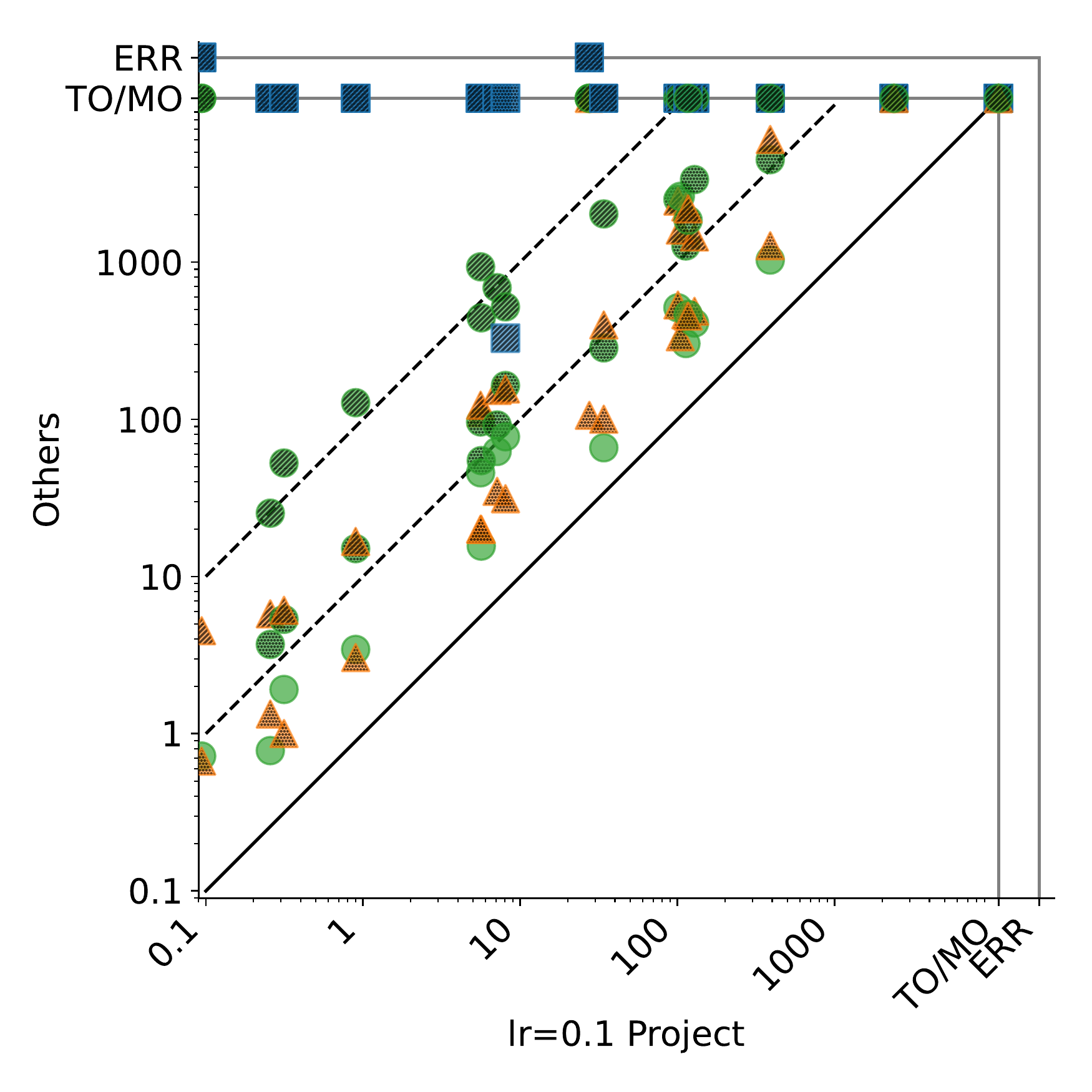}
		\end{minipage}
	}~
	\subfigure[\adam]{
		\begin{minipage}{0.4\linewidth}
				% ADD FIGURE \bestrestrictionmethod VS all other methods on \adam (or another adaptive method when this performs better)
			\includegraphics[width=\linewidth]{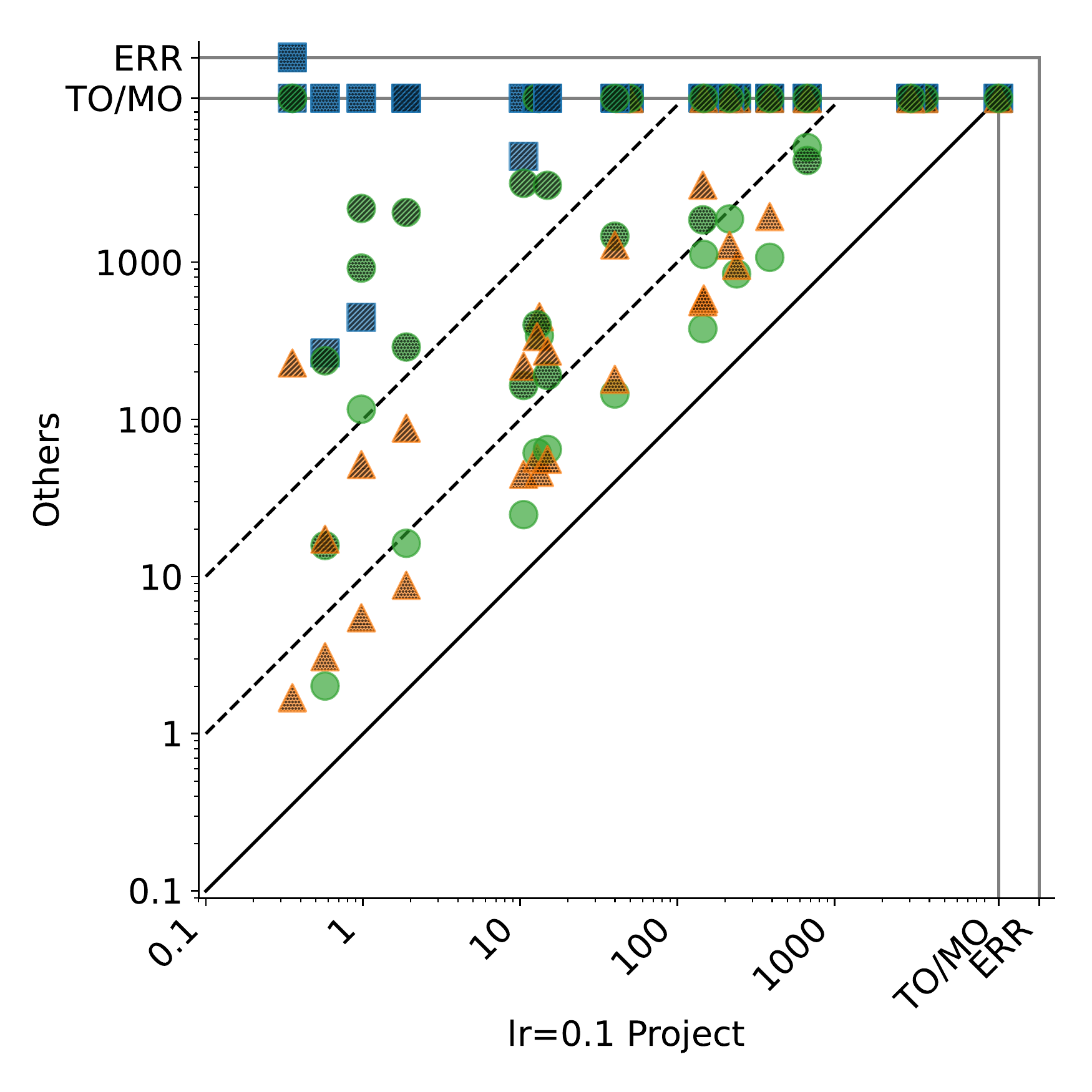}
		\end{minipage}
			\label{fig:restriction:adam}
	}\subfigure{
		\begin{minipage}{0.19\linewidth}
			\includegraphics[width=\linewidth]{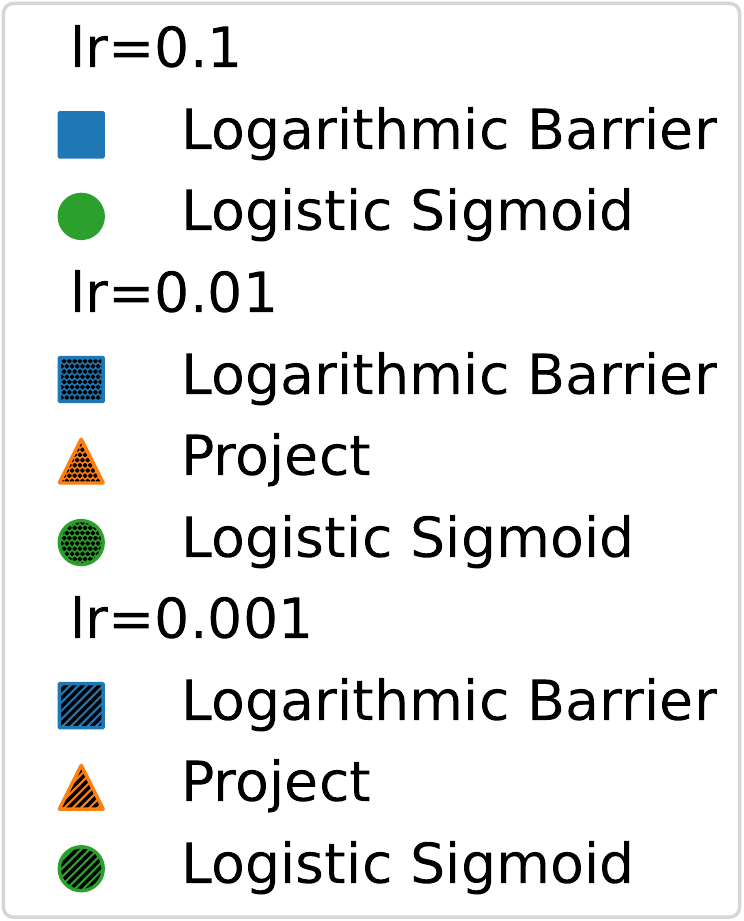}
		\end{minipage}

	}
	\caption{Comparison of different region restriction methods}
\label{fig:restriction:all}
\end{figure}

\Cref{fig:restriction:momentum} (\cref{fig:restriction:adam}) displays how \bestrestrictionmethod with learning rate 0.1 (x-axis) compares to all other restriction methods for the optimum-bounds of all benchmarks on \momentumsign (\adam).
The ERR line indicates that we found an infeasible parameter instantiation.
This occurs when the learning rate is too high, and thus the barrier function not strong enough (see also~\cref{ex:barrier}).
Imagine a vertical line through $x=0.1$.
This line represents the benchmark for which momentum-sign needed $\approx 0.1$ seconds.
We now obtain that the barrier function timed-out or threw an error for all learning rates.
%First of all, we observe that for learning rate 0.1 we obtain almost the same results for \projectionwithout and \projectionwith.

%The TO/MO cases are caused by randomly picking a new starting point, as an infeasible local optimum was found.
First of all, we observe that for \momentumsign the logistic-function is slightly outperformed by \projectionwith.
Secondly, we observe that for \adam the logistic-function is outperformed by \projectionwith often up to orders of magnitude.
Finally, we observe that for learning rate 0.1, the barrier function method is outperformed by \bestrestrictionmethod.
As many \easyparameters occur, the optima often lie at the edges of the region.
Therefore, we choose a relatively large learning rate.
The barrier function method tends to push us away from the edges, as the steps taken are too large, we cannot get close enough to the edge.

\paragraph{Comparison of \gd methods.}
\label{subsec:comp:gdmethods}
When comparing the different \gd Methods, we fix the region restriction method to \bestrestrictionmethod.
\begin{figure}[t]
	\centering
	\subfigure[All methods]{
		\label{fig:all}
			\begin{minipage}{0.4\linewidth}
			\includegraphics[width=\linewidth]{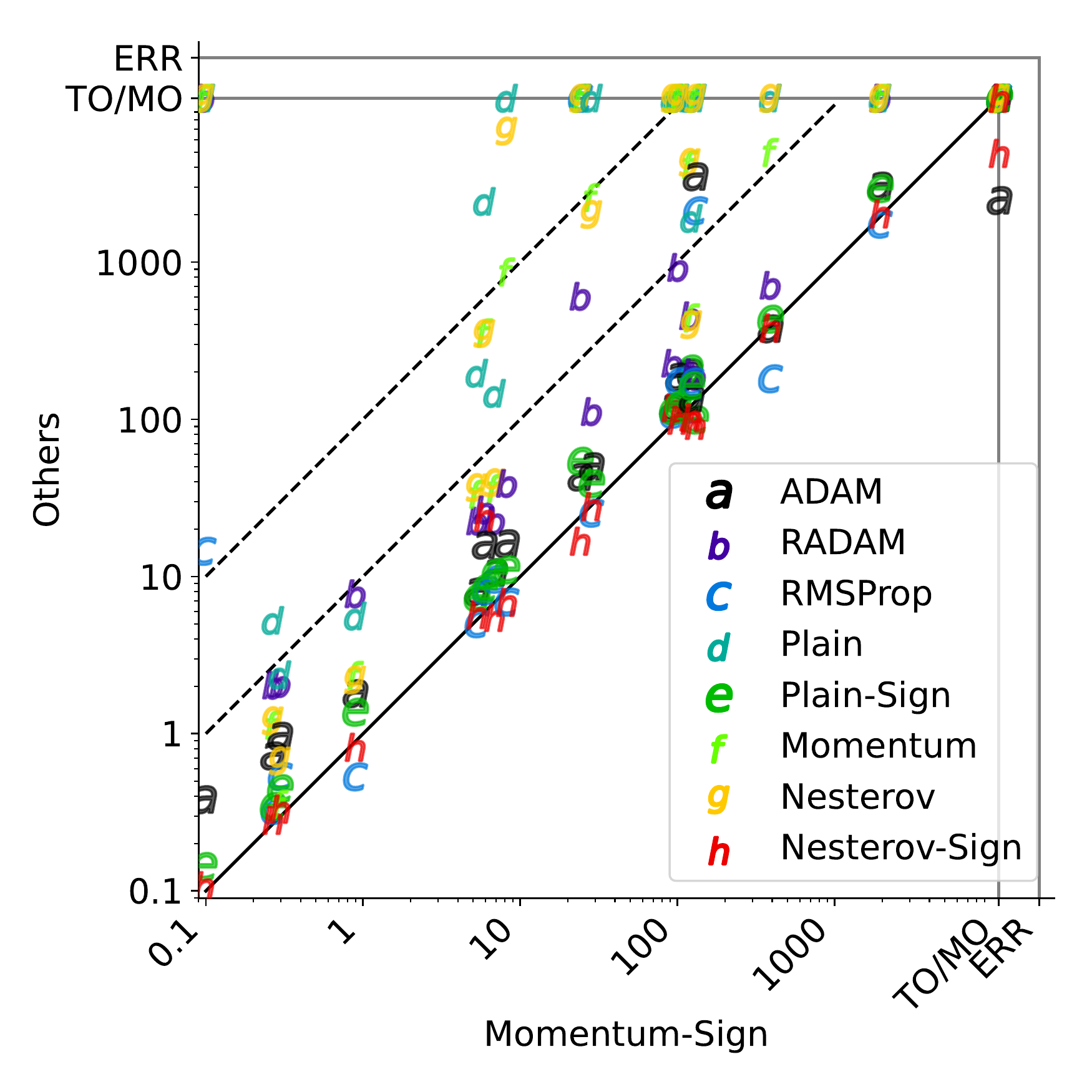}
		\end{minipage}
	}~
	\subfigure[\Bestmethod vs \momentum]{
		\begin{minipage}{0.4\linewidth}
	\includegraphics[width=\linewidth]{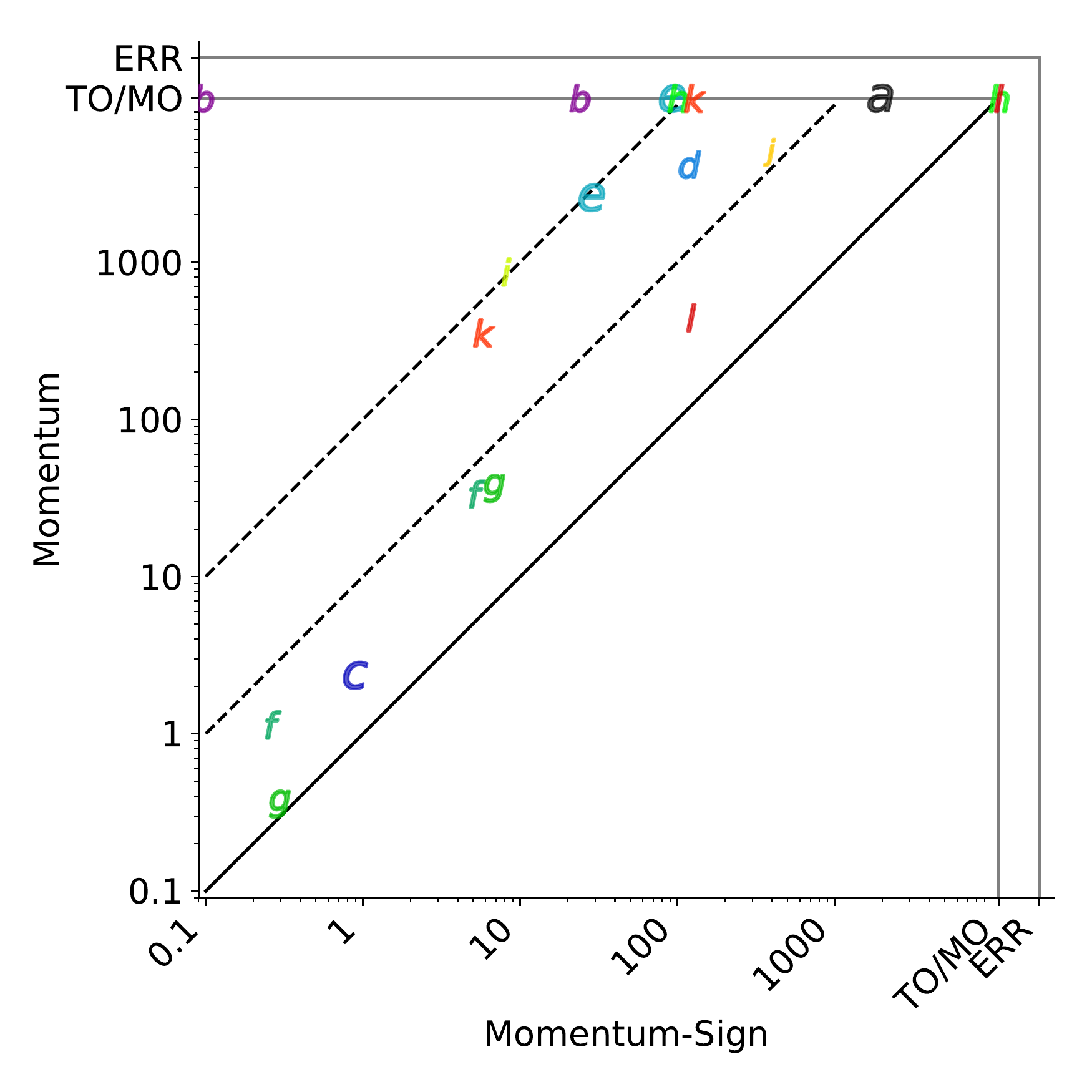}
		\end{minipage}
	\begin{minipage}{0.15\linewidth}
		\includegraphics[width=\linewidth]{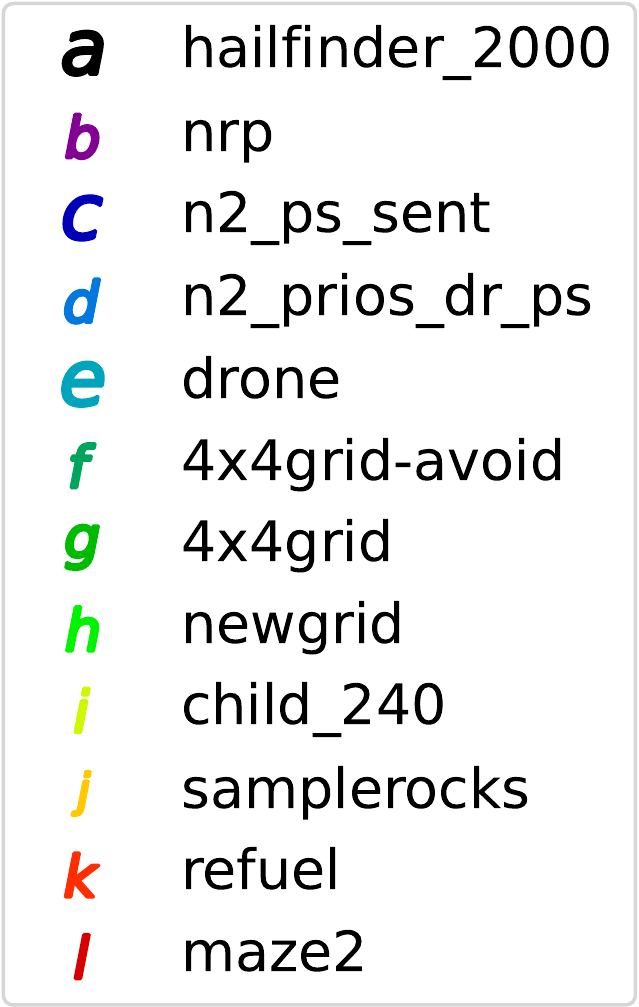}
	\end{minipage}

	\label{fig:best:vs:momentum}
	\label{fig:momentum}
	}
	\caption{Comparison of different \gd methods}
	\label{fig:othergd}
\end{figure}
\Cref{fig:all} displays how \bestmethod (x-axis) compares to all other methods for the optimum-bounds of all benchmarks.
First of all, we observe that \bestmethod typically obtains better runtimes compared to the adaptive methods (\rmsprop, \adam, \radam).
As our parameters occur with almost the same frequency, the adaptive methods are less suited for our benchmarks.
Secondly, we observe that for the non-adaptive methods, the methods where only the sign of the gradient is respected (and not the value gradient itself) often outperform their alternative.
This is caused by 1) the occurrence of the \easyparameters and 2) the influence a single parameter may have on the reachability probability/expected reward.
If a more influential parameter gets changed at the first parameter batch, this might yield a feasible solution before we have even updated all parameters.
Monotonicity could be a cause, and the  ordering of parameters on influentiallity needs further investigation (see~\cref{sec:conclusion}).
%However, for the benchmarks where most parameters are not \easyparameters (e.g. \texttt{samplerocks}), the colloquial version might perform better as can be seen in~\cref{fig:best:vs:momentum}.
%\js{Main conclusion: use non-adaptive methods, when user knows many parameters are \easyparameters, use sign-method, when many not \easyparameters, use colloquial}
%
%Nesterov \gd and \bestmethod are comparable, here the fact that the optimum often happens at the bound turns out to be of great influence.
%As this causes us to not be able to jump to the \enquote{other} side of the optimum.

\paragraph{Comparison to state-of-the-art feasibility methods.}
\label{subsec:comp:state-of-the-art}
\begin{figure}[t]
	\centering
	\label{fig:optimum}
	\begin{minipage}{0.8\linewidth}
	\includegraphics[width=0.5\linewidth]{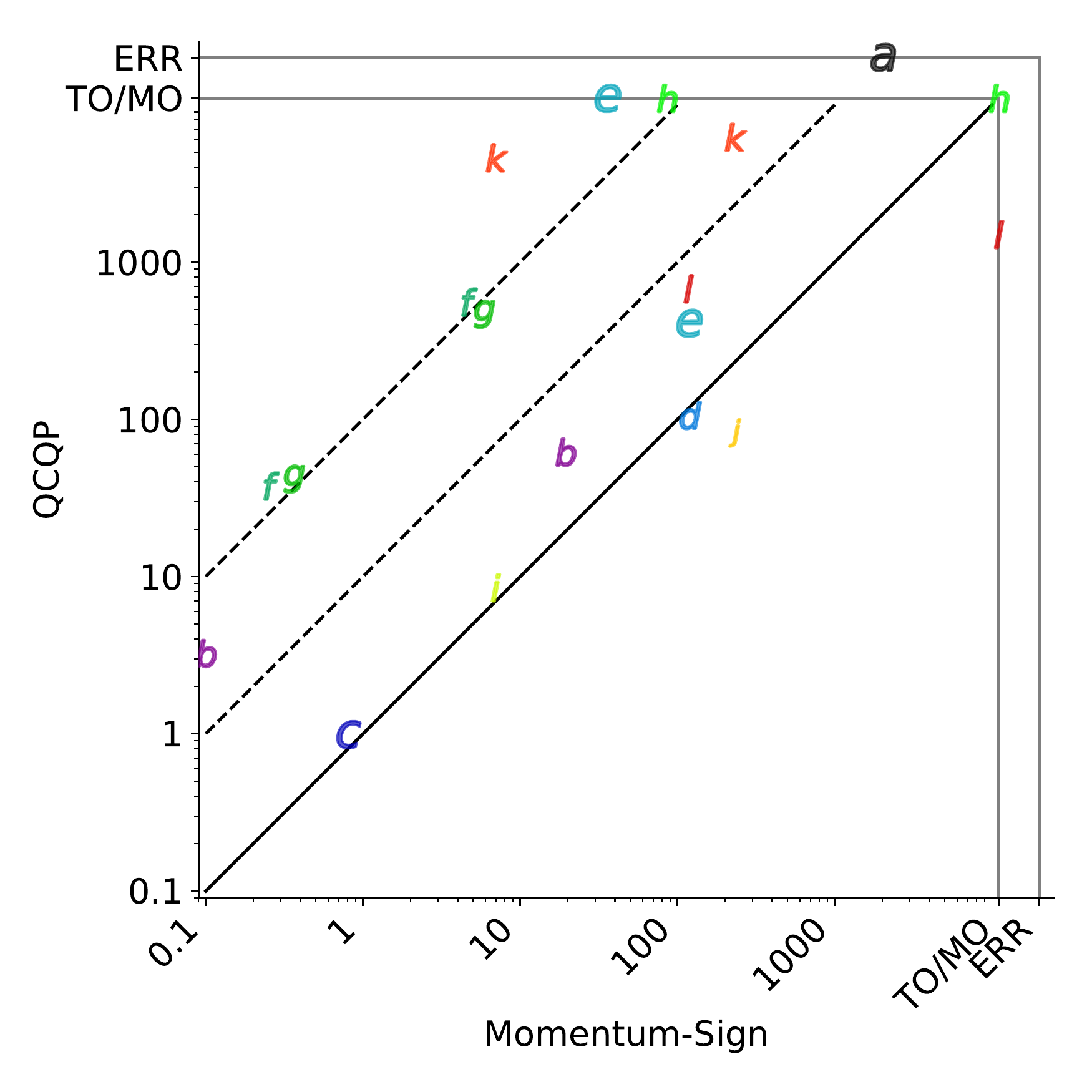}
	\includegraphics[width=0.5\linewidth]{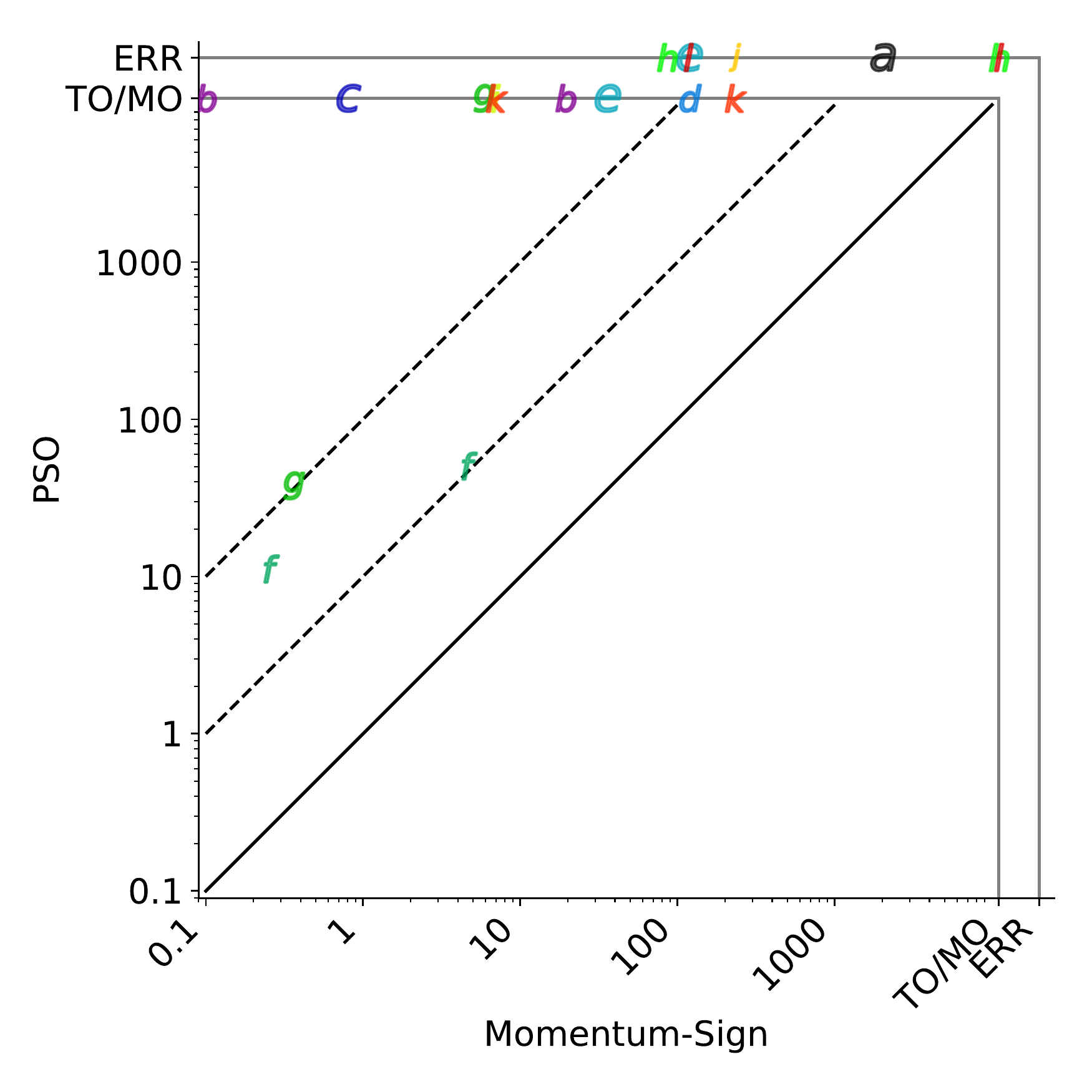}

	\includegraphics[width=0.5\linewidth]{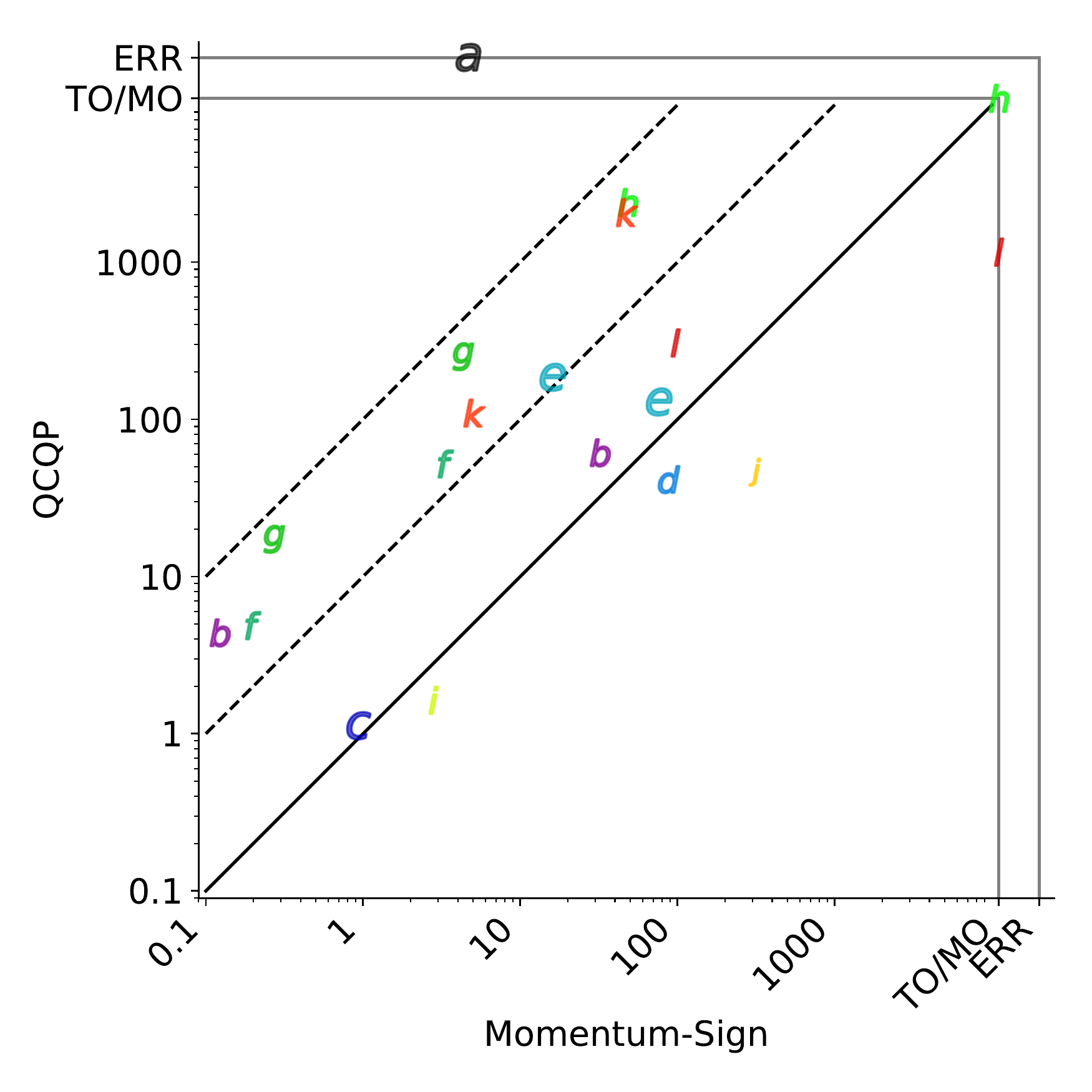}
	\includegraphics[width=0.5\linewidth]{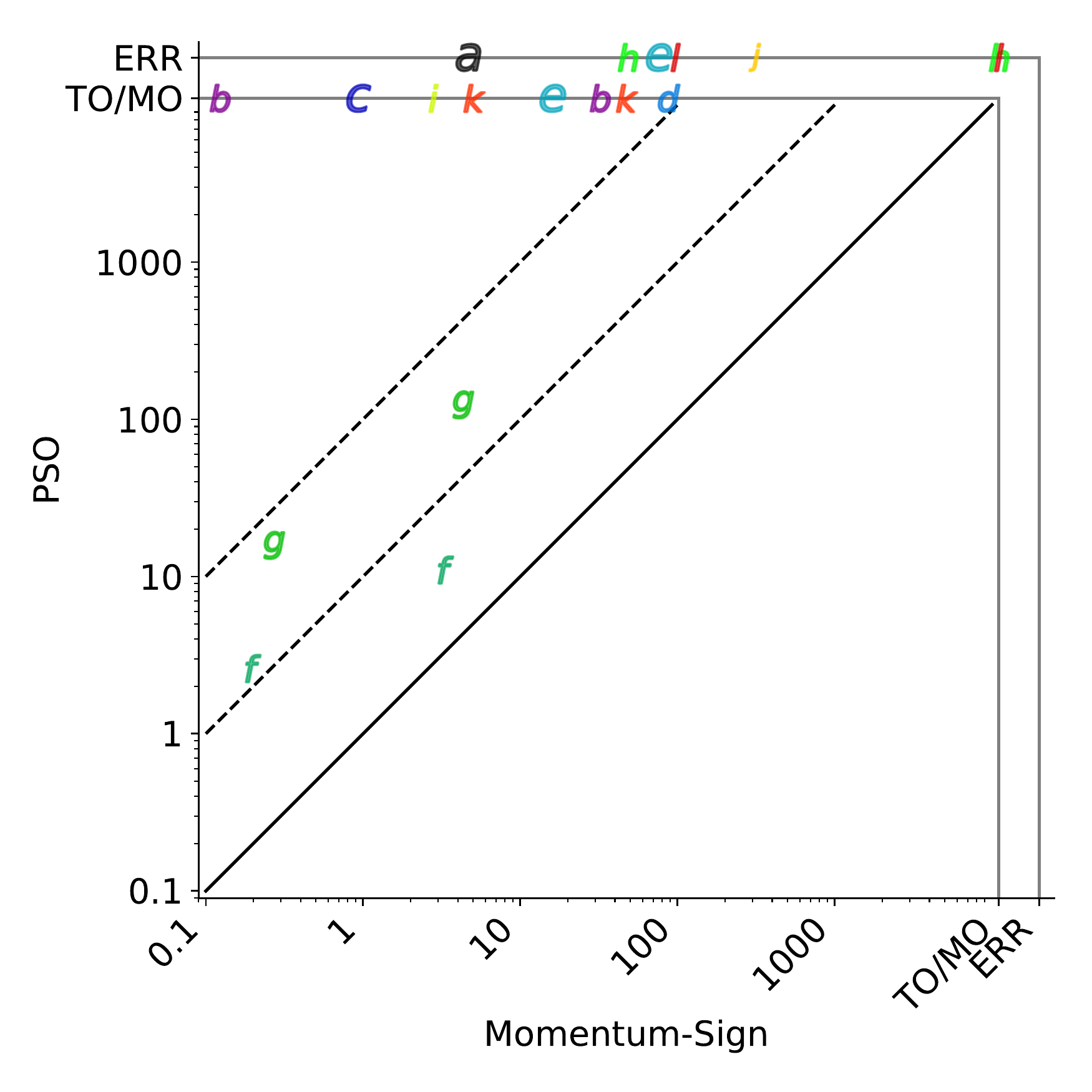}
\end{minipage}
\begin{minipage}{0.15\linewidth}
	\includegraphics[width=\linewidth]{scatter/legendmethods.pdf}
\end{minipage}

	\caption{Comparison of GD with \qcqp and \pso against optimum-bounds (upper) and 20\% relaxed-bounds (lower)}
	\label{fig:qcqppso}
\end{figure}
%In the scatter plots of \cref{fig:resqcqp, fig:respso} we compare the runtimes of \bestmethod to \qcqp and \pso respectively.
\Cref{fig:qcqppso} shows \bestmethod with \bestrestrictionmethod versus \qcqp and \pso respectively, on both the optimum-bounds (upper) and 20\% relaxed-bounds (lower).
%
%\Cref{fig:twenty} similarly show \bestmethod versus \qcqp on 20\% relaxed bounds.
%
%\begin{figure}[t]
%%	\includegraphics[width=0.49\linewidth]{scatter/momentum-qcqp-tenpercent.pdf}
%	\includegraphics[width=0.49\linewidth]{scatter/momentum-qcqp-twentypercent.pdf}
%	\caption{Comparison of GD with \qcqp on 10\% (left) and 20\% relaxed bounds}
%	\label{fig:twenty}
%
%\end{figure}
First of all, our experiments reveal that \bestmethod always outperforms \pso, on both the optimum-bounds and the relaxed-bounds.
Secondly, note that \pso throws an error during preprocessing of the MC on some benchmarks as they violate an implicit assumption by the PSO implementation.
Thirdly, \bestmethod outperforms \qcqp often by at least one order of magnitude.
Finally, we observe that \qcqp outperforms \bestmethod for the \texttt{samplerocks} benchmarks.
Based on the structure of the \texttt{samplerocks} benchmark, preprocessing with e.g. monotonicity checking might improve \bestmethod (see~\cref{sec:conclusion}).

\section{Related Work}
\label{sec:related}

\paragraph{Finding satisfying instantiations of parametric MCs.}
Parametric MCs~\cite{Daws04,DBLP:journals/fac/LanotteMT07} have received quite some attention.
The classical focus has been on computing closed forms for solution functions that map parameter values to expected rewards~\cite{Daws04,DBLP:conf/spin/HahnHZ09,DBLP:conf/icse/FilieriGT11,DBLP:conf/qest/JansenCVWAKB14,DBLP:journals/iandc/BaierHHJKK20,DBLP:journals/corr/abs-1903-07993, DBLP:conf/icse/FangCGA21}.
Feasibility as considered in this paper --- finding a satisfying instantiation --- and its extension to model repair \cite{DBLP:conf/tacas/BartocciGKRS11} has been formulated as a search problem before: 
Chen \emph{et al.}~\cite{DBLP:conf/tase/ChenHHKQ013} considered three different search methods: 
PSO, Markov Chain Monte Carlo and Cross-Entropy. 
In this context, PSO was most successful.
Model repair and feasibility have also been studied as optimization problems:~\cite{DBLP:conf/tacas/BartocciGKRS11} considered a one-shot encoding, whereas~\cite{DBLP:conf/tacas/Cubuktepe0JKPPT17,DBLP:conf/atva/CubuktepeJJKT18} took iterative approaches in which the encoding was simplified around a point to guide the search.  
Spel \emph{et al.}~\cite{DBLP:conf/atva/SpelJK19} present a graph-based heuristic to determine whether a pMC is monotonic, i.e., whether the gradient w.r.t.\ some parameter is positive on the complete parameter space. 
Chen \emph{et al.}~\cite{DBLP:conf/concur/ChenFRS14} analyze (non-controllable) perturbations in MCs from a robustness perspective.
Fast sampling of the parameter space and evaluating the corresponding pMCs is also a preprocessing step to other methods~\cite{DBLP:conf/nfm/HahnHZ11,DBLP:journals/corr/abs-1903-07993}. 
Storm offers optimized routines, and for large numbers of samples, just-in-time compilation is a feasible alternative~\cite{DBLP:conf/atva/GainerHS18}.

\noindent
\paragraph*{Controller synthesis under partial observability.}
The standard model for controller synthesis under partial observability are partially observable MDPs (POMDPs)~\cite{DBLP:journals/ai/KaelblingLC98}.
Controller synthesis in finite POMDPs can equivalently be reformulated as controller synthesis for infinitely large belief-MDPs.
Due to the curse of history, finding a feasible controller for a quantitative objective --- the setting discussed in this paper --- is undecidable~\cite{DBLP:journals/ai/MadaniHC03}.
%% , and in general we cannot hope to do better.
At the beginning of this millennium, this lead to trying to search for memoryless or small-memory controllers in POMDPs~\cite{DBLP:conf/uai/MeuleauKKC99}. 
Among others, the use of gradient descent methods to learn finite-state controllers for partially observable environments was explored by Meuleau \emph{et al.}~\cite{DBLP:conf/uai/MeuleauPKK99}.
This approach has further developed into deep learning for POMDPs, as e.g. used to learn Atari-games~\cite{DBLP:journals/corr/MnihKSGAWR13}. 
Some methods allow explicit extraction of the finite-state controllers~\cite{DBLP:conf/ijcai/CarrJT20}.
Those approaches are generally model-free --- they learn policies from sets of demonstrations or traces.
Closest to our approach is the work by Aberdeen~\cite{Aberdeen2003} in using a model-based approach to find memoryless strategies in POMDPs via gradient descent.
The major differences are in computing the gradients by using value-iteration and a softmax operation, and the use of stochastic gradient descent. 
The approach back then could and did not compare to the current state-of-the-art methods. 

Quickly afterwards, breakthroughs in point-based solvers~\cite{DBLP:conf/ijcai/PineauGT03,DBLP:journals/jair/SpaanV05} and Monte-Carlo methods for finding solutions~\cite{DBLP:conf/nips/SilverV10} shifted attention back to the belief-MDP~\cite{DBLP:conf/aaai/WalravenS17,DBLP:conf/ijcai/HorakBC18} (although some of those ideas also influenced the deep-RL community).
Likewise,  most recent support in the probabilistic model checkers PRISM~\cite{DBLP:journals/rts/Norman0Z17} and  Storm~\cite{DBLP:conf/atva/BorkJKQ20} is based on an abstraction of the belief-MDP~\cite{DBLP:journals/ior/Lovejoy91} and abstraction refinement.
The use of \cite{DBLP:journals/tac/WintererJWJTKB21} of game-based abstraction leads to non-randomized controllers. 
Winterer \emph{et al.}~\cite{DBLP:conf/nfm/Winterer00020} support a finite set of uniform randomizations.
In contrast, we consider an infinite combination of possibilities.
Likewise, Andriushenko \emph{et al.}~\cite{DBLP:conf/tacas/Andriushchenko021} recently consider syntax-guided synthesis for partial information controllers with a finite set of options.  

\section{Conclusion and Future Work}
\label{sec:conclusion}
This paper has shown that gradient descent often outperforms state-of-the-art methods for tackling the feasibility problem: find an the instance of a parametric Markov chain that satisfies a reachability objective.
As synthesizing a realizable controller with a fixed memory structure and a fixed set of potential actions can formally be described as feasibility synthesis in pMCs~\cite{DBLP:conf/uai/Junges0QWWK018}. Our approach supports the correct-by-construction synthesis of controllers for systems whose behavior is described by a stochastic process.
%This is equivalent~\cite{DBLP:conf/uai/Junges0WQWK018} to synthesizing a randomised finite-state controller in POMDPs.
Experiments showed that 1) projection outperforms other region restriction methods, 2) basic gradient descent methods perform better on our problem than more sophisticated ones, and 3) \bestmethod often outperforms \qcqp and \pso.
\paragraph{Outlook.}
As observed in the evaluation of the results, future work consists of extending the preprocessing of the parametric Markov chains with monotonicity checking and investigating a possible ordering of parameters based on the influence on the property.
Also, models with a large state space could be handled by e.g. using value iteration to solve the system of equations.
%Also one could look at ways to parallelize the computation within the minibatches.
Furthermore, questions regarding the \dpmctxt can be asked, e.g. regarding the applicability of bisimulation minimisation or parameter lifting~\cite{DBLP:conf/atva/QuatmannD0JK16}.

\paragraph{Data availability.}
The tools used and data generated in our experimental evaluation are archived at DOI \href{https://doi.org/10.5281/zenodo.5568910}{10.5281/5568910}~\cite{PaperArtifact}.

\clearpage
\pagebreak
\bibliographystyle{splncs04}
\bibliography{literature}
\iftoggle{TR}{
\clearpage
	\appendix
	\section{Proofs}
\defsoed*
First of all, we observe that an alternative way to write the equations in \cref{def:soed} for \(S \setminus \{\good\} = \{s_0,\ldots, s_{n-1}\}\) is:
\begin{align*}
	& \left(1 - \left(
	\begin{array}{c|c}
		A
		&
		0
		\\ \hline
		\pt{} {A}
		&
		A
		\\
	\end{array}
	\right) \right)
	\begin{pmatrix}
		\rv{x_0} \\ \vdots \\ \rv{x_{n-1}} \\
		\mbdervar{p}{x_0} \\ \vdots \\ \mbdervar{p}{x_{n-1}}
	\end{pmatrix}
	=
	\begin{pmatrix}
		\rewFunction(s_0) \\ 
		\vdots \\
		\rewFunction(s_{n-1}) \\ 
		0 \\ 
		\vdots \\
		0 \\ 
	\end{pmatrix}
\end{align*}
where $A$ equals the transition probability function $\probdtmc$ restricted to $S \setminus \{\good\}$.
\label{prf:thm:ds_unique}
\thmdsunique*
	\begin{proof}[of \cref{thm:ds_unique}]
		Clearly $x_s$ equals $\solRew{s}{\good}$.
		Furthermore, observe that in order to obtain the derivative for $x_s$, \ie $\mb{\pt{} x_s}$, we apply the sum rule and the product rule to $\rv{x_s}$. 
	For the variables \(\rv{x_0} , \hdots , \rv{x_{n-1}}\) we have:
	\begin{align*}
	(1 - A)
	\begin{pmatrix}
		\rv{x_0} \\ \vdots \\ \rv{x_{n-1}}
	\end{pmatrix}
	&=
	\begin{pmatrix}
		\rewFunction(s_0) \\ 
		\vdots \\
		\rewFunction(s_{n-1}) \\ 
	\end{pmatrix} \\
	\Leftrightarrow
	\begin{pmatrix}
		\rv{x_0} \\ \vdots \\ \rv{x_{n-1}}
	\end{pmatrix}
	&=
	(1 - A)^{-1}
	\begin{pmatrix}
		\rewFunction(s_0) \\ 
		\vdots \\
		\rewFunction(s_{n-1}) \\ 
	\end{pmatrix}
	=
	\begin{pmatrix}
		\solRew{s_0}{\good} \\
		\vdots \\
		\solRew{s_{n-1}}{\good}\\
	\end{pmatrix}
	\end{align*}
	\((1 - A)^{-1}\) exists because \(1-A\) is invertible (proof in \cite[p. 821]{BK08}).
	For the variables \(\mbdervar{p}{x_0}, \hdots , \mbdervar{p}{x_{n-1}}\) we have:
	\begin{align*}
		\begin{pmatrix}
			\mbdervar{p}{x_0} \\ \vdots \\ \mbdervar{p}{x_{n-1}}
		\end{pmatrix}
		=&
		(\dervar{p} A)
		\begin{pmatrix}
			\rv{x_0}\\ \vdots \\ \rv{x_{n-1}}
		\end{pmatrix}
		+
		A
		\begin{pmatrix}
			\mbdervar{p}{x_0} \\ \vdots \\ \mbdervar{p}{x_{n-1}}
		\end{pmatrix} \\
		\Longleftrightarrow
		(1-A)
		\begin{pmatrix}
			\mbdervar{p}{x_0} \\\vdots \\ \mbdervar{p}{x_{n-1}}
		\end{pmatrix}
		=&
		(\dervar{p} A)
		\begin{pmatrix}
		\solRew{s_0}{\good} \\
		\vdots \\
		\solRew{s_{n-1}}{\good}\\
		\end{pmatrix} \\
		\Longleftrightarrow
		(1-A)
		\begin{pmatrix}
			\mbdervar{p}{x_0} \\ \vdots \\ \mbdervar{p}{x_{n-1}}
		\end{pmatrix}
		=&
		(\dervar{p} A)
		\begin{pmatrix}
			\solRew{s_0}{\good} \\
			\vdots \\
			\solRew{s_{n-1}}{\good}\\
		\end{pmatrix} \\
		\Leftrightarrow
		\begin{pmatrix}
			\mbdervar{p}{x_0} \\ \vdots \\ \mbdervar{p}{x_{n-1}}
		\end{pmatrix}
		=&
		(1 - A)^{-1}
		\left(
		(\dervar{p} A)
		\begin{pmatrix}
			\solRew{s_0}{\good} \\
			\vdots \\
			\solRew{s_{n-1}}{\good}\\
		\end{pmatrix}
		\right)
		\\
	\end{align*}
	Thus there exists a unique solution of the system of equations.
\end{proof}

	\clearpage
}

\end{document}